\newcommand{\munit}[1]{[\mathrm{#1}]}
\newtheorem{lemma}{Lemma}
\newtheorem{theorem}{Theorem}
\newtheorem{assumption}{Assumption} 
\newtheorem{condition}{Condition}
\theoremstyle{remark}  \newtheorem{remark}{Remark}
\newcommand{\bsym}[1]{\boldsymbol{#1}}
\newcommand{\mc}[1]{\mathcal{#1}}
\title{Observer Based Path Following for Underactuated Marine Vessels in the Presence of Ocean Currents: A Global Approach \\
With proofs 
%\thanks{The material presented in this paper has been accepted for publication in the IEEE....}
}    
\date{}
\author{ D.J.W. Belleter \thanks{Contact information: {\tt dennisbelleter@gmail.com}} \footnotemark[5] \and M. Maghenem \thanks{Department of electrical and computer engineering, University of California, Santa Cruz. {\tt mmaghene@ucsc.edu}} \and C. Paliotta \thanks{Contact information: {\tt claudiopaliotta@ieee.org}} \footnotemark[5] \and K.Y. Pettersen \thanks{ Centre for Autonomous Marine Operations and Systems (NTNU AMOS), Department of Engineering Cybernetics, Norwegian University of Science and Technology, NO7491 Trondheim, Norway {\tt kristin.y.pettersen@ntnu.no}} \thanks{D.J.W. Belleter, C. Paliotta, and K.Y. Pettersen were supported by the Research Council of Norway through its Centers of Excellence funding scheme, project No. 223254 – AMOS.}}
\begin{document}
\maketitle 

\begin{abstract} 
In this paper a solution to the problem of following a curved path in the presence of a constant unknown ocean current disturbance is presented. We introduce a path variable that represents the curvilinear abscissa on the path which is used to propagate the path-tangential reference frame. The proposed dynamic update law of the path variable is non singular and the guidance law is designed such that the vessel can reject constant unknown ocean currents by using an ocean current observer. It is shown that the closed-loop system composed of the guidance law, controller and observer provides globally asymptotically stable and locally exponentially stable path following errors. The sway velocity dynamics is analyzed and, under adequate hypothesis on the path curvature, it is shown that the dynamics are well behaved and that the guidance law to exist. Simulations are presented to verify the theoretical findings.
\end{abstract}
%\begin{IEEEkeywords}
\textbf{Keywords.} Underactuated systems, Marine vehicles, path following, line-of-sight, observer-based. 
%\end{IEEEkeywords}
%\IEEEpeerreviewmaketitle
\section{Introduction}
This paper considers curved path-following for underactuated marine vessels. While the literature for straight-line path following of underactuated marine vessels is, by now, well established even in the presence of unknown disturbances, see e.g. \cite{aguiar2007trajectory, lapierre2007nonlinear, Borhaug2011, do2006global, oh2010path}, the literature for curved paths is much less rich and has some lacks. 

In the existing literature, we distinguish between local and global approaches to solving the path following problem for autonomous vehicles. In local approaches the path following problem is solved only when the the vehicle starts in a certain neighborhood of the path. This last fact is due to the singularity that appears in the dynamics of the path variable that propagate the path-tangential reference frame, since it is designed to keep the vessel on the normal of the frame at the path variable abscissa \cite{micaelli1993trajectory}. This method has the advantage of allowing faster convergence to the path compared to global approaches that solve the problem for all initial condition of the vehicle. The dynamics of the path variable, in this case, is used as a degree of freedom in the control design and is chosen to be nonsingular. 

The local approaches for the case of underactuated marine vehicles have been inspired by the seminal works in \cite{samson1992path} and \cite{micaelli1993trajectory} for the case of nonholonomic mobile robots. A first extension from mobile robots to the case of underactuated marine vessels appeared in \cite{encarnaccao2000bpath}, where the path parametrization from \cite{micaelli1993trajectory} is used to define the path-following problem, and a solution is presented using a nonlinear observer-based controller to incorporate the effects of an unknown but constant ocean current. Part of the closed-loop state is shown to be asymptotically stable and the zero dynamics is shown to be well behaved. Similar results for a $3$D underactuated marine vessels are presented in \cite{encarnacao20003d}. Another local result based on the path parametrization of \cite{samson1992path, micaelli1993trajectory} is provided in \cite{do2004state}, where only practical stability of the path-following errors is shown in the presence of environmental disturbances. Recently, in \cite{belleter2017observera}, we extended the latter result to guarantee exponential stability of the path following errors and to provide a complete analysis of the sway velocity.  

A global approach to solve the path following problem under general curved paths for underactuated marine vehicles is presented in \cite{lapierre2003nonlinear} and \cite{lapierre2007nonlinear} based on a result from the field of mobile robots in \cite{soetanto2003adaptive}. Solving this problem for this class of vehicles, offers the challenge of defining a controller which guarantees convergence of the vehicle to the path and at the same time gives boundedness of the sway velocity. In fact, when the vehicle moves along curved paths, the centrifugal effect causes a non-zero side velocity. In order to have a feasible motion of the vehicle, the controller has to guarantee a bounded sway velocity for curved paths. Note that for the case of straight line paths, the controller has to guarantee that the side velocity converges to zero since there is no centrifugal effect when following the path.

For the particular case of straight-line paths, a similar approach to \cite{lapierre2003nonlinear} is considered in \cite{borhaug2006path} in which a look-ahead based steering law is used to guide the vehicle to the path. Stability of the path-following errors is shown using cascaded systems theory, and the zero dynamics are analyzed and shown to be well behaved. In \cite{Borhaug2008} ocean currents are taken into account by adding integral action to the steering law. The work in \cite{Borhaug2008} is reformulated in \cite{caharija2012b} and \cite{caharija2012a} using relative velocities. Experimental results are obtained in \cite{caharija2016integral}. See also \cite{li2009design}, where a control design for straight-line paths has been validated by experiments. However in this work the sway dynamics are neglected in the control design procedure and stability analysis. Furthermore, for the particular case of circles and paths made of straight-line sections connecting way points, line-of-sight guidance laws are presented in \cite{breivik2004path} and \cite{Fossen2003}, respectively. In \cite{breivik2004path} the vessel is regulated to the tangent of its projection on the circle. An extension to the three dimensional case is given in \cite{breivik2005guidance} and \cite{breivik2005principles}. However, these works do not consider environmental disturbances.

For the case of general curved paths, we notice that a non-complete analysis of closed-loop dynamics has been provided in most of the existing literature. Specifically, the sway dynamics is not analysed and the existence and boundedness of the control input is not guaranteed \cite{lapierre2003nonlinear, lapierre2007nonlinear, moe2014path}. At the exception of \cite{paliotta2017trajectory}, where a different approach to the trajectory tracking and path following problems is proposed.
The approach presented in \cite{paliotta2017trajectory} is based on
a different choice of the output of the system, the so called hand position point. Then the authors apply an input-output linearizing controller in order to make the new output converge to the desired path. However, the controller presented in \cite{paliotta2017trajectory} is not applicable if the same output for the system as in \cite{lapierre2003nonlinear, lapierre2007nonlinear, moe2014path} is chosen. In this paper, we provide a rigoreous approach by keeping the traditional choice of the pivot point as output of the system and by modifying the control structure proposed in \cite{moe2014path}. These modifications allow us to guarantee the global asymptotic stability of the path following errors. Moreover, we derive sufficient conditions on the path curvature that allow us to prove the existence of the control law and the boundedness of the sway velocity. This is achieved by considering a global parametrization of the general curved path in order to solve the problem using a combination of an ocean current observer and a controller based on a \emph{line-of-sight-like} guidance. We consider underactuated vehicles, in particular, vehicles which do not have sway actuation. The guidance based controller proposed in this paper is said to be line-of-sight-like since it adopts a time-varying look-ahead distance depending on the path-following error. The time-varying look-ahead distance is modified compared to the one in \cite{moe2014path}, and it is shown that a new dependency on the path-following errors is crucial to prove boundedness of the sway velocity, which is the best behavior we can achieve for the zero dynamics in the case of general curved paths. It should be noted that these modifications are not merely an extension but are necessary conditions for the validity of stability results for the problem under consideration. To the best of our knowledge, such a result is unique in the literature of underactuated marine vehicles. Furthermore in \cite{moe2014path} the controller for the yaw rate dynamics used signals that dependent on the unknown ocean current. Therefore the controller could not be implemented. We specifically address this issue and derive a controller for the yaw rate dynamics that depends only on known signals.

The outline of the paper is as follows. In Section \ref{COG-sec:mdl} the vessel model is given. The path-following problem and the chosen path parametrization are introduced in Section \ref{COG-sec:pd}. Section \ref{COG-sec:ctrl} presents the ocean current observer that is used together with the guidance law and controllers. The closed-loop system is then formulated and analyzed in Section \ref{COG-sec:clsys}. A simulation case study is presented in Section \ref{COG-sec:case} and conclusions are given in Section \ref{COG-sec:cncl}.             
\section{Vehicle Model} \label{COG-sec:mdl}
In this section we introduce the vehicle's model given in \cite[p.152-157]{fossen2011handbook}. 
However, in the simulation section we explain why we use $u_{rd}=5m/s$ in the case studies.
This model can be used to describe an autonomous surface vessel or an autonomous underwater vehicle moving in the plane.  
The dynamics of the vehicle is:
\begin{subequations} \label{COG_eq:relVelMod}
\begin{align}
\dot{\eta} = & \; R(\psi) \nu_{r} + V \\
M \dot \nu_{r} + C(\nu_{r}) \nu_{r} + D \nu_{r} = & \; B f
\end{align}
\end{subequations}
where $\eta \triangleq [x,y,\psi]^T$ describes the position of the center of gravity and the orientation of the vehicle with respect to the inertial frame, $\nu_r \triangleq [u_r,v_r,r]^T$ contains the surge, the sway and the yaw velocities respectively, $M$ is the mass matrix, $C(\nu_{r})$ is the Coriolis matrix, $D$ is the damping matrix, $B$ is the thrust allocation matrix, and $f \triangleq [T_{u}, T_{r}]$ is the vector of control inputs composed by the surge thrust and the rudder angle inputs $T_{u}$ and $T_{r}$, respectively.
\\
For port-starboard symmetric vehicles, the matrices $(M, B, C, D)$ have the following structure
\begin{subequations}
\begin{align}
 M \triangleq &
\begin{bmatrix} 
m_{11}   &  0   & 0 \\
0       & m_{22} & m_{23} \\
0      &  m_{32} & m_{33}
\end{bmatrix}; B  \triangleq 
\begin{bmatrix}
b_{11}   & 0  \\
0           & b_{22}  \\
0           & b_{32}
\end{bmatrix}; \label{COG_eq:B}\\
C \triangleq & \begin{bmatrix}
0 & 0 & -m_{22} v_{r} - m_{23} r \\
0 & 0 & m_{11} u_{r} \\
m_{22}v_{r} + m_{23}r & -m_{11}u_{r} & 0
\end{bmatrix}; \\
D \triangleq & \begin{bmatrix}
d_{11}   & 0   & 0 \\
0           & d_{22} & d_{23} \\
0           & d_{32} & d_{33}
\end{bmatrix}.
\end{align}
\end{subequations}
It is worth noting that the model \eqref{COG_eq:relVelMod} is valid for low speed motion, for which the damping can be assumed to be linear. Specifically, at low speed the non-linear damping effects can be neglected \cite[p.152-157]{fossen2011handbook}. In the simulations in this paper, we use a model from \cite{caharija2014thesis} in which the damping is linear up to $\pm7~\munit{m/s}$.
Furthermore, since $f \in \mathbf{R}^{2}$, the vehicle is under-actuated in the work space $\mathbf{R}^{3}$. This latter fact implies that the vehicle is not directly actuated in the sway direction, that is, sideways. Moreover, given the structure of the matrix $B$ in \eqref{COG_eq:B}, it is easy to see that the control input in yaw $T_{r}$, indirectly affects the sway direction. However, according to \cite{fredriksen2004global}, for port-starboard symmetric vehicles, it is always possible to apply a change of coordinates such that the model \eqref{COG_eq:relVelMod}, can be expressed with respect to a coordinate frame positioned at the pivot point instead of at the center of gravity. The pivot point lies along the center line of the vehicle, ahead of the center of gravity and always exists for port-starboard vehicles \cite{fredriksen2004global}, and in this point the yaw control does not affect the sway motion. Hence, the dynamical model with the body-fixed frame positioned at the pivot point is the following: 
\begin{subequations} \label{COG-eq:dynsys}
\begin{align}  
\dot{x}&=u_r \cos(\psi) - v_r \sin(\psi) + V_x \label{COG-eq:xdot} \\
\dot{y}&=u_r \sin(\psi) + v_r \cos(\psi) + V_y \label{COG-eq:ydot} \\
\dot{\psi} &= r \label{COG-eq:phidot} \\
\dot{u}_r &= F_{u_r}(v_r,r)-\tfrac{d_{11}}{m_{11}}u_r + \tau_u \label{COG-eq:urdot}\\
\dot{v}_r &= X(u_r)r+Y(u_r)v_r \label{COG-eq:vrdot} \\
\dot{r} &= F_r(u_r,v_r,r)+\tau_r. \label{COG-eq:rdot}
\end{align}
\end{subequations}
%where $(x,y)$ and $\psi$ are, respectively, the position of the center of gravity and the orientation of the vehicle with respect to the inertial frame. The variables $u_r$, $v_r$ and $r$ are, respectively, the surge, the sway and the yaw velocities.  
where $\tau_u$ and $\tau_r$ are, respectively, the surge force and the yaw torque input. The functions $X(u_r)$, $Y(u_r)$, $F_u$, and $F_r$ are given by 
\begin{subequations}
\begin{align}
F_{u_r}(v_r,r) &\triangleq  \tfrac{1}{m_{11}}(m_{22}v_r+m_{23}r)r \\ 
X(u_r) &\triangleq \tfrac{m^2_{23}-m_{11}m_{33}}{m_{22}m_{33}-m^2_{23}}u_r + \tfrac{d_{33}m_{23}-d_{23}m_{33}}{m_{22}m_{33}-m^2_{23}}\\ \label{COG-eq:Xur}
Y(u_r) &\triangleq  \tfrac{(m_{22}-m_{11})m_{23}}{m_{22}m_{33}-m^2_{23}}u_r - \tfrac{d_{22}m_{33}-d_{32}m_{23}}{m_{22}m_{33}-m^2_{23}}\\ \label{COG-eq:Yur}
\begin{split}
F_r(\cdot)&\triangleq  \tfrac{m_{23}d_{22}-m_{22}(d_{32}+(m_{22}-m_{11})u_r)}{m_{22}m_{33}-m^2_{23}}v_r \\
&\quad +\tfrac{m_{23}(d_{23}+m_{11}u_r)-m_{22}(d_{33}+m_{23}u_r)}{m_{22}m_{33}-m^2_{23}}r.
\end{split}
\end{align}
\end{subequations}
Note that the functions $X(u_r)$ and $Y(u_r)$ are linear functions of the velocity. The kinematic variables are illustrated in Figure \ref{COG-fig:defin}. The ocean current satisfies the following assumption.
\begin{assumption} \label{COG-assum:current}
The ocean current is assumed to be a constant in time, uniform in space, and irrotational with respect to the inertial frame, i.e. $\bsym{V_c}\triangleq [V_x,V_y,0]^T$. Furthermore, there exists a constant $V_{\max}>0$ such that $\|\bsym{V_c}\|=\sqrt{V^2_{x}+V^2_{y}}\leq V_{\max}$.
\end{assumption}
We consider a range of values of the desired surge velocity $u_{rd}$ such that the following assumption holds.
\begin{assumption} \label{COG-assum:Yur}
It is assumed that $Y(u_r)$ satisfies 
\[ Y(u_r) \leq -Y_{\min}< 0, \,\forall u_r \in [-V_{\max},u_{rd}],\]
i.e. $Y(u_r)$ is negative for the range of desired velocities considered.
\end{assumption} 
Additionally we assume that the following assumption holds
\begin{assumption} \label{COG-assum:vel}
It is assumed that $u_{rd}(t)$ is $\mathcal{C}^1$ and satisfies $u_{rd}(t) > 2V_{\max}~\forall t$, i.e. the desired relative velocity of the vessel is larger than twice the maximum value of the ocean current.
\end{assumption} 
Assumption \ref{COG-assum:vel} assures that the vessel has enough propulsion power to overcome the ocean current affecting it. The factor two in Assumption \ref{COG-assum:vel} adds some extra conservativeness to bound the solutions of the ocean current observer, this is discussed further in Section \ref{COG-sec:ctrl}.
\begin{figure}[!htb]
\centering
\includegraphics[width=.5\columnwidth]{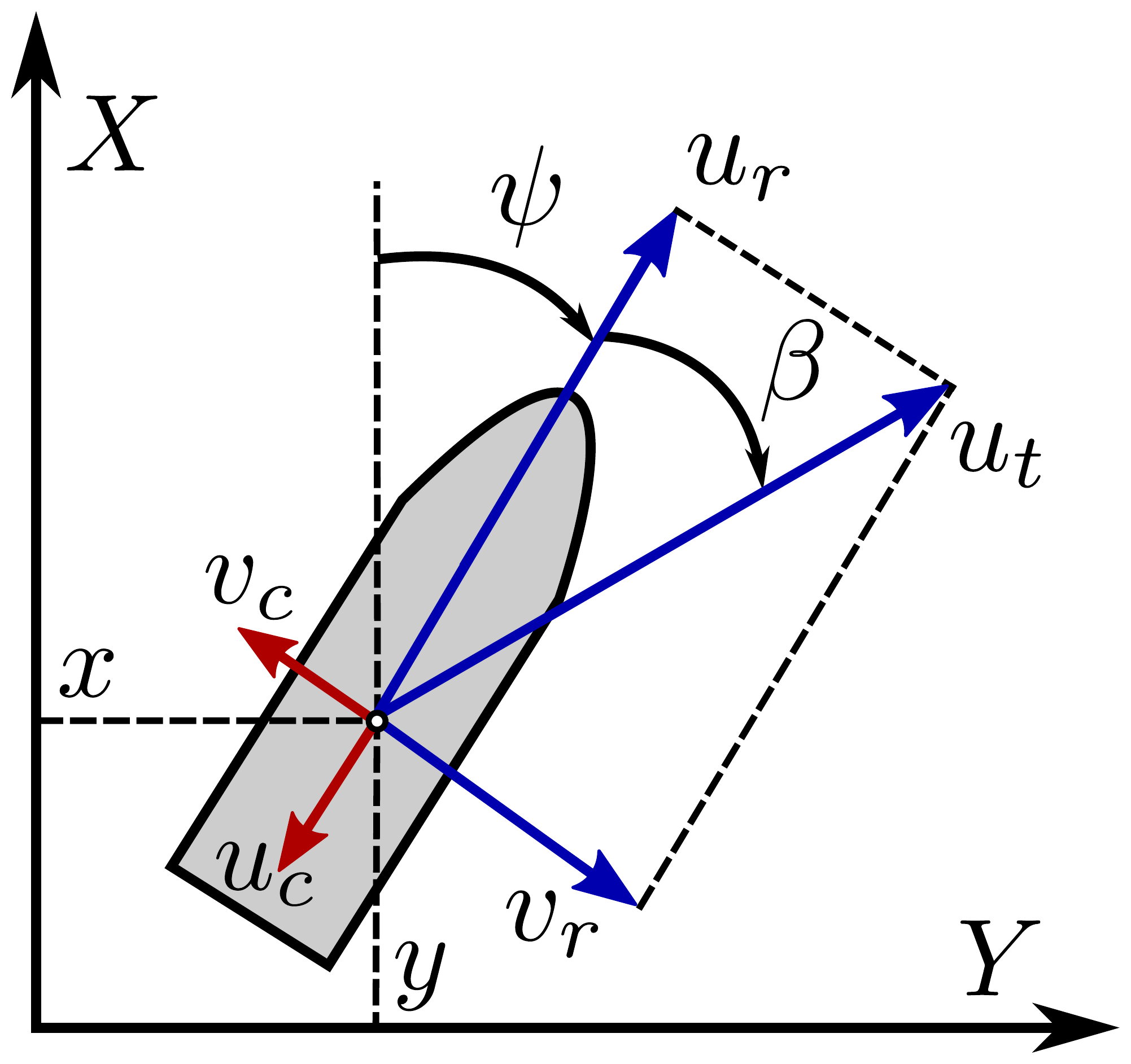}
\caption{Definition of the vehicle state variables.} \label{COG-fig:defin}
\end{figure}
\section{Problem definition} \label{COG-sec:pd}
The objective in global path-following problems is to make each trajectory of the controlled vehicle converge to a desired trajectory describing a smooth path $P$ regardless of the vehicle's initial location. For an underactuated vehicle, the path following task can be achieved by positioning the vehicle on the path with a total speed $u_t \triangleq \sqrt{u^2_r+v^2_r}$ (see Figure \ref{COG-fig:defin}) that is tangential to the path. However, this approach restricts the initial location of the vehicle to be on the path. To have a more general result in terms of both attractivity and invariance of the path, we introduce adequate path-following errors. The path-following errors correspond to the error between the vehicle and a point moving on the path. To do so, we parametrize the path $P$ using a path variable $\theta$. Moreover, for each point on the path, $(x_p(\theta), y_p(\theta)) \in P$, we introduce a path-tangential frame as illustrated in Figure \ref{COG-fig:path}. 
Hence, the path-following errors expressed in the tangential frame and denoted by $\bsym{p}_{b/p}\triangleq[x_{b/p},y_{b/p}]^T$ take the following form:
\begin{equation} \label{COG_eq:ye}
\begin{bmatrix} x_{b/p} \\ y_{b/p} \end{bmatrix} = \begin{bmatrix} \cos(\gamma_p(\theta)) & \sin(\gamma_p(\theta)) \\ -\sin(\gamma_p(\theta)) & \cos(\gamma_p(\theta)) \end{bmatrix} \begin{bmatrix} x - x_{p}(\theta) \\ y - y_{p}(\theta) \end{bmatrix}
\end{equation} 
where $\gamma(\theta)$ is the angle of the path with respect to the inertial $X$-axis. 
\\
The time derivative of the angle $\gamma(\theta)$ is given by $\dot{\gamma}(\theta) = \kappa(\theta)\dot{\theta}$ where $\kappa(\theta)$ is the curvature of $P$ at $\theta$. The path-following error is expressed by $x_{b/p}$ and $y_{b/p}$ which are the relative positions between the path frame and the body frame expressed along the axes of the path frame. Hence, $x_{b/p}$ is the position of the vehicle along the path-frame tangential axis and $y_{b/f}$ is the position of the vehicle along the path-frame normal axis. That is, the 
path-following problem is solved if we regulate both $x_{b/p}$ and $y_{b/p}$ to zero when $\bsym{p}_{p}(\theta) \triangleq (x_p(\theta), y_p(\theta))$ describes the path $P$ parametrized by $\theta$.
\begin{figure}[!htb]
\centering
\includegraphics[width=.7\columnwidth]{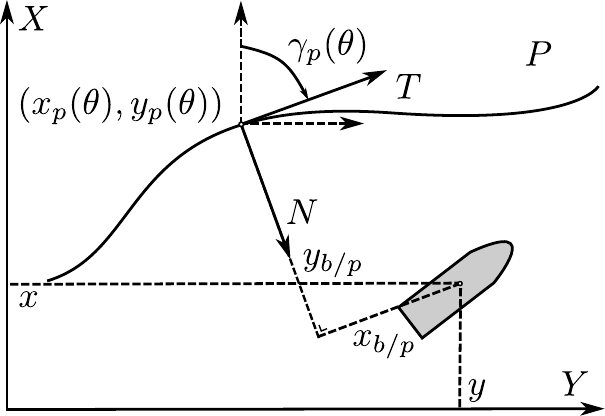}
\caption{Definition of the path.}\label{COG-fig:path}
\end{figure}
\\
The dynamics of the error coordinates introduced in 
\eqref{COG_eq:ye} is computed by substitute (\ref{COG-eq:xdot}-\ref{COG-eq:phidot}) in the derivative of \eqref{COG_eq:ye}. After some rearrangements and basic trigonometric relations we obtain:
\begin{subequations} \label{COG-eq:frame}
\begin{align} 
\dot{x}_{b/p} &= u_t \cos(\chi - \gamma_p)- \dot{\theta} (1 - \kappa(\theta)y_{b/p}) + V_T \label{COG-eq:framea}\\ 
\dot{y}_{b/p} &= u_t \sin(\chi - \gamma_p) + V_N - \kappa(\theta)\dot{\theta}x_{b/p} \label{COG-eq:frameb}
\end{align}
\end{subequations}
where $\chi \triangleq \psi + \beta$ and $\beta \triangleq \arctan(v_r / u_r)$ are the course and the side-slip angles respectively (see Figure \ref{COG-fig:defin}), $V_T \triangleq V_x \cos(\gamma_p(\theta)) + V_y \sin(\gamma_p(\theta))$ and $V_N \triangleq V_y \cos(\gamma_p(\theta)) - V_x \sin(\gamma_p(\theta))$ are the ocean current component in the tangential direction and normal direction of the path-tangential reference frame respectively.
\section{Path Parametrisation}
As proposed in \cite{lapierre2007nonlinear} we can use the update law of the path variable as an extra degree of freedom in the controller design. In particular, the propagation speed of the frame is used to obtain the desired behaviour of the $x_{b/p}$ dynamics. We choose 
\begin{equation} \label{COG-eq:thgbl}
\dot{\theta} = u_{t} \cos(\chi - \gamma_p(\theta)) + k_x f_{\theta}(x_{b/p},y_{b/p}) + V_T
\end{equation}
where $k_x>0$ is a control gain for the convergence of $x_{b/p}$ and $f_{\theta}(x_{b/p},y_{b/p})$ is a function to be designed later satisfying $f_{\theta}(x_{b/p},y_{b/p})x_{b/p} > 0$. Consequently, when substituting \eqref{COG-eq:thgbl} in \eqref{COG-eq:framea} we obtain
\begin{equation}
\dot{x}_{b/p} = - k_x f_{\theta}(x_{b/p},y_{b/p}) + \dot{\theta}\kappa(\theta)y_{b/p}.
\end{equation} 
For the case where the current is unknown we need to replace $V_T$ by its estimate $\hat{V}_T$, and the update law becomes
\begin{equation} \label{COG-eq:thgbl2}
\dot{\theta} = u_{t} \cos(\chi - \gamma_p(\theta)) + k_x f_{\theta}(x_{b/p},y_{b/p}) + \hat{V}_T
\end{equation}
Substituting this revised update law into \eqref{COG-eq:frame} results in
\begin{align}
\dot{x}_{b/p} &= - k_x f_{\theta}(x_{b/p},y_{b/p}) + \dot{\theta}\kappa(\theta)y_{b/p} + \tilde{V}_T \label{COG-eq:dxbp} \\
\dot{y}_{b/p} &= u_{t}\sin(\chi - \gamma_p(\theta)) + V_N - x_{b/p}\kappa(\theta)\dot{\theta} \label{COG-eq:dye}
\end{align} 
where $ \tilde{V}_T \triangleq V_T - \hat{V}_T $. Note that, as opposed to \cite{belleter2017observera}, the parametrization \eqref{COG-eq:thgbl2} does not decouple \eqref{COG-eq:dxbp} from \eqref{COG-eq:dye}. Consequently, since \eqref{COG-eq:dxbp} depends on $y_{b/p}$, the state $x_{b/p}$ does not converge independently from $y_{b/p}$ and both $x_{b/p}$ and $y_{b/p}$ will have to be regulated to zero using the surge and yaw rate controllers. Moreover, note that although this parametrisation has the advantage over \cite{belleter2017observera} that the update law can be well defined on the entire state space, the path-following error is here not defined as the shortest distance to the path since the vehicle is not necessarily on the normal.  
\section{Controllers, Observer, and Guidance} \label{COG-sec:ctrl}
In this section we design the two control laws, $\tau_u$ and $\tau_r$, and the ocean current estimator that are used to achieve path-following. In the first subsection we present the velocity control law $\tau_u$. The second subsection presents the ocean current observer. The third subsection presents the guidance law to be used.
\subsection{Surge velocity control}
The velocity control law is a feedback-linearising P-controller that is used to drive the relative surge velocity to the desired $u_{rd}(t)$ and is given by
\begin{equation} \label{COG-eq:tauu}
\tau_u = -F_{u_r}(v_r,r) + \dot{u}_{rd} + \frac{d_{11}}{m_{11}}u_{rd} - k_u (u_r - u_{rd})
\end{equation}
where $k_u>0$ is a constant controller gain. It is straightforward to verify that \eqref{COG-eq:tauu} ensures global exponential tracking of the desired velocity. In particular, when \eqref{COG-eq:tauu} is substituted in \eqref{COG-eq:urdot} we obtain
\begin{equation} \label{COG-eq:utild}
\dot{\tilde{u}}_r = -k_{u}(u_r-u_{rd}) = -k_u \tilde{u}_r
\end{equation}
where $\tilde{u}_r \triangleq u_r -u_{rd}$. Consequently, the velocity error dynamics are described by a stable linear systems, which assures exponential tracking of the desired velocity $u_{rd}$.

\subsection{Ocean current estimator} \label{COG-subsec:obs}
In this section we present the ocean current observer and show that for constant ocean currents the estimation errors are globally exponentially stable. Moreover, for an appropriate choice of the initial conditions we have
$$ \Vert \hat{V}_N(t) \Vert < u_{rd}(t)\textrm{, if~~} 2 V_{\max} < u_{rd}(t),~\forall t \geq 0$$
where $\hat{V}_N(t)$ is the estimate of $V_N(t)$. \\ 
We will use the ocean current estimator introduced in \cite{moe2014path} and used in \cite{belleter2017observera}. This observer provides the estimate of the ocean current needed to implement \eqref{COG-eq:thgbl2} and the guidance law developed in the next subsection. Rather then estimating the time-varying current components in the path frame $V_T$ and $V_N$, the observer is used to estimate the constant ocean current components in the inertial frame $V_x$ and $V_y$. The observer from \cite{aguiar2007dynamic} is based on the kinematic equations of the vehicle, i.e. \eqref{COG-eq:xdot} and \eqref{COG-eq:ydot}, and requires measurements of the vehicle's $x$ and $y$ position in the inertial frame. The observer is formulated as
\begin{subequations} \label{COG-eq:obs}
\begin{align}
\dot{\hat{x}} &= u_r\cos(\psi) - v_r\sin(\psi) + \hat{V}_x + k_{x_1} \tilde{x} \\
\dot{\hat{y}} &= u_r\sin(\psi) + v_r\cos(\psi) + \hat{V}_y + k_{y_1} \tilde{y} \\
\dot{\hat{V}}_x &= k_{x_2}\tilde{x} \\
\dot{\hat{V}}_y &= k_{y_2}\tilde{y}
\end{align}
\end{subequations}
where $\tilde{x} \triangleq x - \hat{x}$ and $\tilde{y} \triangleq y -\hat{y}$ are the positional errors and $k_{x_1}$, $k_{x_2}$, $k_{y_1}$, and $k_{y_2}$ are constant positive gains. Consequently, the estimation error dynamics are given by
\begin{equation} \label{COG-eq:obserr}
\begin{bmatrix} \dot{\tilde{x}} \\ \dot{\tilde{y}} \\ \dot{\tilde{V}}_x \\ \dot{\tilde{V}}_y \end{bmatrix} = \begin{bmatrix} -k_{x_1} & 0 & 1 & 0 \\ 0 & -k_{y_1} & 0 & 1 \\ -k_{x_2} & 0 & 0 & 0 \\ 0 & -k_{y_2} & 0 & 0 \end{bmatrix} \begin{bmatrix} \tilde{x} \\ \tilde{y} \\ \tilde{V}_x \\ \tilde{V}_y \end{bmatrix}
\end{equation}
which is a linear system with negative eigenvalues. Hence, the observer error dynamics are globally exponentially stable at the origin. Note that this implies that also $\hat{V}_T$ and $\hat{V}_N$ go to $V_T$ and $V_N$ respectively with exponential convergence since it holds that
\begin{subequations} \label{COG-eq:Vtrans}
\begin{align} 
\hat{V}_T &= \hat{V}_x \cos(\gamma(\theta)) + \hat{V}_y \sin(\gamma(\theta)) \\
\hat{V}_N &= -\hat{V}_x \sin(\gamma(\theta)) + \hat{V}_y \cos(\gamma(\theta)).
\end{align}
\end{subequations}
For implementation of the controllers it is desired that $\Vert \hat{V}_N(t) \Vert < u_{rd}(t)~\forall t$. To achieve this we first choose the initial conditions of the estimator as 
\begin{equation} \label{eq:estim}
[\hat{x}(t_0),\hat{y}(t_0),\hat{V}_x(t_0),\hat{V}_y(t_0)]^T = [x(t_0),y(t_0),0,0]^T.
\end{equation}
Consequently, the initial estimation error is given by 
\begin{equation}
[ \tilde{x}(t_0),\tilde{y}(t_0),\tilde{V}_x(t_0),\tilde{V}_y(t_0)]^T = [0,0,V_x,V_y]^T
\end{equation}
which has a norm smaller than or equal to $V_{\max}$ according to Assumption \ref{COG-assum:current}. Now consider the function
\begin{equation}
W(t) = \tilde{x}^2 + \tilde{y}^2 + \frac{1}{k_{x_2}}\tilde{V}^2_x + \frac{1}{k_{y_2}}\tilde{V}^2_y
\end{equation}
which has the following time derivative
\begin{align}
\begin{split}
\dot{W}(t) &= -2k_{x_1}\tilde{x}^2 - 2k_{y_1}\tilde{y}^2 \leq 0.
\end{split}
\end{align}
This implies that $W(t)\leq \Vert W(t_0) \Vert$. From our choice of initial conditions we know that
\begin{equation}
\Vert W(t_0) \Vert = \frac{V^2_x}{k_{x_2}} + \frac{V^2_y}{k_{y_2}} \leq \frac{1}{\min(k_{x_2},k_{y_2})}V^2_{\max}.
\end{equation}
Moreover, it is straightforward to verify
\begin{equation}
\frac{1}{\max(k_{x_2},k_{y_2})}\Vert \tilde{\bsym{V}}_c(t) \Vert^2 \leq W(t).
\end{equation}
Combining the observations given above we obtain
\begin{equation}
\frac{1}{\max(k_{x_2},k_{y_2})}\Vert \tilde{\bsym{V}}_c(t) \Vert^2 \leq \frac{1}{\min(k_{x_2},k_{y_2})}V^2_{\max}.
\end{equation}
Consequently, we obtain
\begin{align}
\begin{split}
\Vert \tilde{\bsym{V}}_c(t) \Vert &\leq \sqrt{\tfrac{\max(k_{x_2},k_{y_2})}{\min(k_{x_2},k_{y_2})}}V_{\max} \\ &< \sqrt{\tfrac{\max(k_{x_2},k_{y_2})}{\min(k_{x_2},k_{y_2})}}u_{rd}(t),~\forall t,
\end{split}
\end{align}
which implies that if the gains are chosen as $k_{x_2}=k_{y_2}$ we have 
\begin{equation}
\Vert \hat{V}_N \Vert \leq 2 V_{\max} \leq u_{rd}(t),~\forall t.
\end{equation}
Hence, $\Vert \hat{V}_N \Vert < u_{rd}(t),~\forall t$ if $2 V_{\max} < u_{rd}(t),~\forall t$. 
\begin{remark}
The bound $u_{rd}(t)>2V_{\max},~\forall t$, is only required when deriving the bound on the solutions of the observer. In particular, it is required to guarantee that $\Vert \hat{V}_N \Vert < u_{rd}(t),~\forall t$. For the rest of the analysis it suffices that $V_{\max} < u_{rd},~\forall t$. Therefore, if the more conservative bound $2V_{\max} < u_{rd},~\forall t$ is not satisfied, the observer can be changed to an observer that allows explicit bounds on the estimate $\hat{V}_N$, e.g. the observer developed in \cite{narendra1987new}, rather than an observer that only provides a bound on the error $\tilde{\bsym{V}}_c$ as is the case here. For practical purposes, the estimate can also be saturated such that $\Vert \hat{V}_N \Vert < u_{rd},~\forall t$, which is the approach taken in \cite{moe2014path}. However, in the theoretical analysis of the yaw controller we use derivatives of $\hat{V}_{N}$ which will be discontinuous when saturation is applied.
\end{remark}
\subsection{Guidance for global parametrisation}
When using the global parametrisation we can define one guidance law that can be used everywhere. As in \cite{moe2014path} we choose a line-of-sight like guidance law of the form:
\begin{equation} \label{COG-eq:psidgl}
\psi_d = \gamma(\theta) - \mathrm{atan}\left(\frac{v_r}{u_{rd}}\right) - \mathrm{atan}\left(\frac{y_{b/p} + g}{\Delta (\bsym{p}_{b/p})} \right).
\end{equation}
The guidance law consists of three terms. The first term is a feedforward of the angle of the path with respect to the inertial frame. The second part is the desired side-slip angle, i.e. the angle between the surge velocity and the total speed when $u_r \equiv u_{rd}$. This side-slip angle is used to make the vehicle's total speed tangential to the path when the sway velocity is non-zero. The third term is a line-of-sight (LOS) term that is intended to steer the vehicle to the path, where $g$ is a term dependent on the ocean current. The choice of $g$ provides an extra design freedom to compensate for the component of the ocean current along the normal axis $V_N$. \\
The term $\Delta (\bsym{p}_{b/p})$ is the look-ahead distance. The look-ahead distance has a constant part and a part that depends on the path-following error $\bsym{p}_{b/p}$, i.e. the distance between the current position of the vehicle and the point on the path defined by the current value of $\theta$.

When we substitute \eqref{COG-eq:psidgl} in \eqref{COG-eq:dye} we obtain
\begin{align} \label{COG-eq:dyegl}
\begin{split}
\dot{y}_{b/p} = &~u_{td} \sin\left(\psi_d +\tilde{\psi} +\beta_d - \gamma_p(\theta)\right) + V_N \\ &~- x_{b/p}\kappa(\theta)\dot{\theta} + \tilde{u}_r \sin(\psi - \gamma_p(\theta)) \\
= &~- \frac{u_{td}(y_{b/p} + g)}{\sqrt{(y_{b/p}+g)^2 + \Delta^2}} - x_{b/p} \dot{\gamma}_p(\theta) \\ &~+ V_N + G_1(\tilde{\psi},\tilde{u}_r,g,\psi_d,y_{b/p},u_{td})
\end{split}
\end{align}
where $G_1(\cdot)$ is a perturbing term given by
\begin{align} \label{COG-eq:Ggl}
\begin{split}
G_1(\cdot ) = &~u_{td}\left[ 1- \cos(\tilde{\psi}) \right]\sin \left( \arctan \left(  \frac{y_{b/p} + g}{\Delta} \right) \right) \\  &~+ \tilde{u}_r \sin(\psi-\gamma_p(\theta)) \\ &+ u_{td}\cos \left( \arctan \left(  \frac{y_{b/p} + g}{\Delta} \right) \right) \sin (\tilde{\psi}).
\end{split}
\end{align} 
Note that $G_1(\cdot)$ satisfies
\begin{subequations} \label{COG-eq:Gbnd}
\begin{align} 
G_1(0,0,g,\psi_d,y_{b/p},u_{td}) &= 0 \\
\Vert G_1(\tilde{\psi},\tilde{u}_r,\psi_d,y_{b/p},u_{td}) \Vert &\leq \zeta(u_{td})\Vert [\tilde{\psi},\tilde{u}_r]^T \Vert
\end{align}
\end{subequations}
where $\zeta(u_{td}) > 0 $, which shows that $G_1(\cdot)$ is zero when the perturbing variables are zero and that it has maximal linear growth in the perturbing variables. 

To compensate for the ocean current component $V_N$, the variable $g$ is now chosen to satisfy the equality
\begin{equation}
u_{td} \frac{g}{\sqrt{\Delta^2 +(y_{b/p} +g)^2}} = \hat{V}_N
\end{equation}
which is a choice inspired by \cite{moe2014path}. In order for $g$ to satisfy the equality above, it should be the solution of the following second order equality
\begin{equation*}
\underbrace{(u^2_{td} - \hat{V}^2_N)}_{-a}\left(\frac{g}{\hat{V}_N}\right)^2 = \underbrace{\Delta^2 + y^2_{b/p}}_c + 2 \underbrace{y_{b/p} \hat{V}_N}_b \left(\frac{g}{\hat{V}_N}\right)
\end{equation*}
hence, we choose $g$ to be
\begin{equation}
g = \hat{V}_{N} \frac{b + \sqrt{b^2 - ac}}{-a}
\end{equation}
which has the same sign as $\hat{V}_N$ and is well defined for $(u^2_{rd} - \hat{V}^2_N) = -a > 0$. Substituting this in \eqref{COG-eq:dyegl} gives
\begin{align}
\begin{split}
\dot{y}_{b/p} = &~-u_{td} \frac{y_{b/p}}{\sqrt{(y_{b/p}+g)^2 + \Delta^2}} - x_{b/p} \dot{\gamma}_p(\theta) \\ &~+ \tilde{V}_N + G_1(\tilde{\psi},\tilde{u},\psi_d,y_{b/p},u_{td}).
\end{split}
\end{align}
By choosing $\dot \theta$ to be:
\begin{align} \label{COG-eq:teta}
\dot{\theta} = u_t \cos(\psi + \beta - \gamma_p(\theta)) + \frac{k_\delta x_{b/p}}{\sqrt{1+x^2_{b/p}}} + \hat{V}_{T}
\end{align} 
and substituting \eqref{COG-eq:teta} in \eqref{COG-eq:dxbp}, we obtain: 
\begin{align} \label{COG-eq:39}
\dot x_{b/p} = & - k_\delta \frac{x_{b/p}}{\sqrt{1 + x^2_{b/p}}} + \dot \theta \kappa(\theta) y_{b/p}  + \tilde{V}_{T}
\end{align}
where $k_{\delta}>0$. We see from \eqref{COG-eq:39} that by the choice of $\dot \theta$, we introduce a stabilising term to the tangential error dynamics by appropriately controlling the propagation of our path-tangential frame. 

The derivative of \eqref{COG-eq:psidgl} is given by
\begin{align} \label{COGeq::dpsidgl}
\dot{\psi}_d = &~\kappa(\theta)\dot{\theta} + \frac{y_{b/p} + g}{\Delta^2 + (y_{b/p} + g)^2} \frac{\partial \Delta}{\partial \bsym{p}_{b/p}} \dot{\bsym{p}}_{b/p} \nonumber \\ &- \frac{\dot{v}_r u_{rd} -  \dot{u}_{rd} v_r}{u^2_{rd} + v^2_r} - \frac{\Delta(\dot{y}_{b/p} + \dot{g})}{\Delta^2 + (y_{b/f} + g)^2} 
\end{align}
%\begin{split} \end{split}
with
\begin{equation}
\dot{g} = \dot{\hat{V}}_N \frac{b+ \sqrt{b^2-ac} }{-a} + \frac{\partial g}{\partial a} \dot{a} +  \frac{\partial g}{\partial b} \dot{b} +  \frac{\partial g}{\partial c} \dot{c}.
\end{equation}
The expression for $\dot{\psi}_d$ contains terms depending on $\dot{y}_{b/p}$ and $\dot{x}_{b/p}$ which depend on $\tilde{V}_N$ and $\tilde{V}_T$, respectively. Consequently, $\dot{\psi}_d$ depends on unknown variables and cannot be used to implement the yaw rate controller. This was not considered in \cite{moe2014path} where the proposed controller contained both $\dot{\psi}_d$ and $\ddot{\psi}_d$. 

Moreover, from \eqref{COGeq::dpsidgl} we see that $\dot{\psi}_d$ contains $\dot{v}_r$, which depends on $r = \dot{\psi}$. Therefore, the yaw rate error $\dot{\tilde{\psi}} \triangleq \dot{\psi} - \dot{\psi}_d$ grows with $\dot{\psi}$, which leads to a necessary condition for a well defined yaw rate error. In particular, the dependence on $r=\dot \psi$ becomes clear when we write out the yaw rate error dynamics:
\begin{align} \label{COG-eq:dpsitilgl}
\begin{split}
\dot{\tilde{\psi}} = &~r \left[ 1 + \frac{ X(u_r) u_{rd} } { u^2_{rd} + v^2_r } - \frac{ 2 v_r X(u_r)\Delta }{\Delta^2 + \left( y_{b/p} + g \right)^2 } \frac{\partial g}{\partial a}\right] \\ 
&-  \kappa(\theta) \dot{\theta} + \frac{ Y(u_r) v_r  u_{rd} - \dot{u}_{rd} v_r } { u^2_{rd} + v^2_r } \\ 
&+ \frac{ 2\Delta }{\Delta^2 + \left( y_{b/p} + g \right)^2 } \left[  \dot{\hat{V}}_{N} \frac{ b + \sqrt{b^2-ac} }{-2a} \right.\\  &+\frac{\partial g}{\partial a} \left( \hat{V}_{N} \dot{\hat{V}}_{N} - u_{rd} \dot{u}_{rd} - v_r Y(u_r) v_r \right) \\ 
&+\frac{\partial g}{\partial b}\dot{\hat{V}}_{N} y_{b/p} + \left[ \tfrac{1}{2} + \frac{\partial g}{\partial c} y_{b/p} + \frac{\partial g}{\partial b} \hat{V}_{N}\right] \dot{y}_{b/p} \\ 
&+ \left. \frac{\partial g}{\partial c} \Delta \left[ \frac{\partial \Delta}{\partial x_{b/p}} \dot{x}_{b/p} + \frac{\partial \Delta}{\partial y_{b/p}} \dot{y}_{p/f} \right] \right] \\
&- \frac{y_{b/p} + g}{\Delta^2 + (y_{b/p} + g)^2} \left[ \frac{\partial \Delta}{\partial x_{b/p}} \dot{x}_{b/p}  + \frac{\partial \Delta}{\partial y_{b/p}} \dot{y}_{b/p} \right] \end{split} \notag \\
\triangleq &~C_r(\cdot) r + f_{\psi}(x_{b/p},y_{b/p},u_r,v_r,\theta).
\end{align}
Since $\dot{\psi}_d$ depends on the unknown signal $\tilde{V}_N$ we cannot choose $r_d = \dot{\psi}_d$. To define an expression for $r_d$ without requiring the knowledge of $\tilde{V}_N$, we define $r_d = f_{\psi}(x_{b/p},y_{b/p},u_r,v_r,\theta)/C_r$. Consequently, we have the following necessary condition for the existence of our controller:
\begin{condition} \label{COG-cond:Cr}
It should hold that
\begin{align} \label{COG-eq:28}
\begin{split}
C_r \triangleq 1 + &\left[\frac{ X(u_r)u_{rd} } { u^2_{rd} + v^2_r } - \frac{ 2X(u_r)v_r\Delta}{\Delta^2 + \left( y_{b/p} + g \right)^2 } \frac{\partial g}{\partial a} \right] 
\end{split}
\end{align}
is larger than zero such that the yaw rate controller is well defined for all time.
\end{condition}
\begin{remark}
The condition above can be verified for any positive velocity, for the vehicles that we have model parameters for. Note that for most vehicles this condition is verifiable since standard vehicle design practices will result in similar properties of the function $X(u_r)$. 
 Besides having a lower bound greater then zero, $C_r$ is also upper-bounded since the term between brackets can be verified to be bounded in its arguments.
\end{remark}
As discussed above, since $\dot{\psi}_d$ depends on the unknown signal $\tilde{V}_N$, we cannot choose $r_d = \dot{\psi}_d$. To define an expression for $r_d$ without requiring the knowledge of $\tilde{V}_N$ we use \eqref{COG-eq:28} to define
\begin{align} \label{COG-eq:rdgl} 
\begin{split}
r_d &=  -C^{-1}_r \left[ \kappa(\theta) \left( u_t \cos(\chi - \gamma_p) +  \tfrac{k_\delta x_{b/p}}{\sqrt{ 1 + x^2_{b/p} }} +\hat{V}_T \right) \right. \\ &+ \tfrac{ Y(u_r) v_r  u_{rd} - \dot{u}_{rd} v_r } { u^2_{rd} + v^2_r } + \tfrac{ \Delta }{\Delta^2 + \left( y_{b/p} + g \right)^2 } \left[  \dot{\hat{V}}_{N} \tfrac{ b + \sqrt{b^2-ac} }{-a} \right.\\ 
&+  2\tfrac{\partial g}{\partial b} \dot{\hat{V}}_{N} y_{b/p} +2\tfrac{\partial g}{\partial a} \left(\hat{V}_{N} \dot{\hat{V}}_{N} - u_{rd} \dot{u}_{rd} \right. \\ &\left. -  Y(u_r) v^2_r \right) + \left[ 1 + \tfrac{\partial g}{\partial c} 2 y_{b/p} + \tfrac{\partial g}{\partial b} 2 \hat{V}_{N} \right] \times \\ &\times \left(    \tfrac{- u_{td} y_{b/p} }{ \sqrt{\Delta^2 + (y_{b/p} + g)^2} } + G_1 - x_{b/p} \kappa(\theta)\dot{\theta} \right) \\ 
&+ \left. \tfrac{\partial g}{\partial c} 2 \Delta \left[ \tfrac{\partial \Delta}{\partial x_{b/p}}  \left( \tfrac{- k_\delta x_{b/p}}{\sqrt{ 1 + x^2_{b/p} }} + y_{b/p}\kappa(\theta)\dot{\theta} \right) \right.\right. \\ 
&+ \left.\left. \tfrac{\partial \Delta}{\partial y_{b/p}} \left( \tfrac{- u_{td} y_{b/p} }{ \sqrt{\Delta^2 + (y_{b/p} + g)^2} } + G_1 - x_{b/p} \kappa(\theta)\dot{\theta} \right) \right] \right] \\
&- \tfrac{y_{b/p} + g}{\Delta^2 + (y_{b/p} + g)^2} \left[ \tfrac{\partial \Delta}{\partial x_{b/p}}  \left( \tfrac{- k_\delta x_{b/p}}{\sqrt{ 1 + x^2_{b/p} }} + y_{b/p}\kappa(\theta)\dot{\theta} \right) \right. \\  
&+ \left.\left. \tfrac{\partial \Delta}{\partial y_{b/p}} \left(  \tfrac{- u_{td} y_{b/p} }{ \sqrt{\Delta^2 + (y_{b/p} + g)^2} } + G_1 - x_{b/p} \kappa(\theta)\dot{\theta}\right) \right]\right]
\end{split}	
\end{align}
with,
\begin{align}
\begin{split}
\dot{\hat{V}}_{N} = &~\dot{\hat{V}}_y \cos(\gamma_p(\theta)) -  \dot{\hat{V}}_x  \sin(\gamma_p(\theta)) + \kappa(\theta)\hat{V}^2_T  \\ &- \hat{V}_{T}\kappa(\theta) \left( u_t \cos(\chi - \gamma_p(\theta)) +  \frac{k_\delta x_{b/p}}{\sqrt{ 1 + x^2_{b/p} }} \right). 
\end{split}
\end{align}
Notice that \eqref{COG-eq:rdgl} is equivalent to \eqref{COG-eq:psidgl}, but without the terms depending on the unknowns $\tilde{V}_x$ and $\tilde{V}_y$ that cannot be used in the control inputs. If we substitute \eqref{COG-eq:rdgl} in \eqref{COG-eq:dpsitilgl} and use $\tilde{r} \triangleq r - r_d$, we obtain 
\begin{align} \label{COG-eq:dpsitilgl2}
\begin{split}
\dot{\tilde{\psi}} = &~C_r \tilde{r} + \frac{ 2\Delta \left[ \tfrac{1}{2} + \tfrac{\partial g}{\partial c} y_{b/p} +  \tfrac{\partial g}{\partial b} \hat{V}_{N}  \right]}{\Delta^2 + \left( y_{b/p} + g \right)^2 }  \tilde{V}_{N} \\  & +\frac{ 2\Delta^2 \partial g/\partial c - (y_{b/p} + g) }{\Delta^2 + \left( y_{b/p} + g \right)^2 } \frac{\partial \Delta}{\partial \bsym{p}_{b/p}}  \begin{bmatrix} \begin{smallmatrix}\tilde{V}_T \\ \tilde{V}_N \end{smallmatrix}\end{bmatrix}.
\end{split} 
\end{align} 
From \eqref{COG-eq:dpsitilgl2} it can be seen that using our choice of $r_d$ results in yaw angle error dynamics that have a term dependent on the yaw rate error $\tilde{r}$ and a perturbing term that vanishes when the estimation errors $\tilde{V}_T$ and $\tilde{V}_N$ go to zero. To add acceleration feedforward to the yaw rate controller, the derivative of $r_d$ should be calculated. From the definition of $r_d$, it can be seen that $r_d$ has the following dependencies $r_d = r_d( h^T, y_{b/p}, x_{b/p}, \tilde{\psi}, \tilde{x}, \tilde{y} )$ with $h \triangleq [\theta, v_r, u_r, u_{rd}, \dot{u}_{rd}, \hat{V}_{T}, \hat{V}_{N} ]^T$ a vector that contains all the variable, whose time derivative do not depend on $\tilde{V}_N$ and $\tilde{V}_T$. However, the other dependencies of $r_d$ do introduce new terms depending on $\tilde{V}_N$ and $\tilde{V}_T$ when the acceleration feedforward is calculated. Consequently, we instead define our yaw rate controller with an acceleration feedforward that contains only the known terms from $\dot{r}_d$
\begin{align} \label{COG-eq:tau_rgl}
\tau_r = &~- F(u_r, v_r, r) + \tfrac{\partial r_d}{\partial h^T} \dot{h} - k_1 \tilde{r} - k_2 C_r \tilde{\psi} \notag \\ \begin{split} &~+\tfrac{\partial r_d}{\partial x_{b/p}} \left( - \tfrac{k_\delta x_{b/p}}{\sqrt{ 1 + x^2_{b/p} }} + y_{b/p} \kappa(\theta) \dot{\theta} \right) \\  &+ \tfrac{\partial r_d}{ \partial \tilde{\psi}} C_r \tilde{r}  - \tfrac{\partial r_d}{ \partial \tilde{x}} k_x \tilde{x} - \tfrac{\partial r_d}{ \partial \tilde{y}} k_y \tilde{y} \end{split} \\ &- \tfrac{\partial r_d}{\partial y_{b/p}} \left( \tfrac{ u_{td}y_{b/p} } { \sqrt{\Delta^2 + (y_{b/p} + g)^2 } } - G_1(\cdot) + x_{b/p} \kappa(\theta) \dot{\theta} \right) \notag
\end{align}
where $k_1>0$ and $k_2>0$ are constant controller gains. 

Using the controller \eqref{COG-eq:tau_rgl} in \eqref{COG-eq:rdot} the yaw rate error dynamics become
\begin{align} 
\dot{\tilde{r}} = & - k_1 \tilde{r} - k_2 C_r\tilde{\psi} + \tfrac{\partial r_d}{\partial \tilde{x}} \tilde{V}_x + \tfrac{\partial r_d}{ \partial \tilde{y}} \tilde{V}_y \label{COG-eq:drtilgl2} \\ 
&- \tfrac{\partial r_d}{ \partial \tilde{\psi}} \left[ \tfrac{ 2\Delta}{\Delta^2 + \left( y_{b/p} + g \right)^2 } \left[ \tfrac{1}{2} + \tfrac{\partial g}{\partial c} y_{b/p} +  \tfrac{\partial g}{\partial b} \hat{V}_{N}  \right] \tilde{V}_{N} \right. \notag \\  
& \left.+ \tfrac{ 2\Delta^2 \partial g/\partial c - (y_{b/p} + g) }{\Delta^2 + \left( y_{b/p} + g \right)^2 } \tfrac{\partial \Delta}{\partial \bsym{p}_{b/p}} \begin{bmatrix} \begin{smallmatrix}\tilde{V}_T \\ \tilde{V}_N \end{smallmatrix}\end{bmatrix} \right] - \tfrac{\partial r_d}{\partial \bsym{p}_{b/p}} \begin{bmatrix} \begin{smallmatrix}\tilde{V}_T \\ \tilde{V}_N \end{smallmatrix}\end{bmatrix} \notag
\end{align}
which contains two stabilising terms $- k_1 \tilde{r}$ and $ - k_2 C_r \tilde{\psi}$, and perturbing terms depending on $\tilde{V}_T$ and $\tilde{V}_N$ that cannot be cancelled by the controller. 
\begin{remark}
It is straightforward to verify that all the terms in \eqref{COG-eq:rdgl} are smooth fractionals that are bounded with respect to $(y_{b/p}$, $x_{b/p}$, $\tilde{x}$, $\tilde{y}$, $\tilde{\psi}$, $\Delta)$ or are periodic functions with linear arguments, and consequently the partial derivatives in \eqref{COG-eq:tau_rgl} and \eqref{COG-eq:drtilgl2} are all bounded.
This is something that is used when showing closed-loop stability in the next section.
\end{remark}
\section{Closed-Loop Analysis} \label{COG-sec:clsys}
In this section we analyse the closed-loop system of the model \eqref{COG-eq:dynsys} with controllers \eqref{COG-eq:tauu} and \eqref{COG-eq:tau_rgl} and observer \eqref{COG-eq:obs}, when the frame propagates along the path $P$ with update law \eqref{COG-eq:thgbl2}. 
To show that the path following is achieved we have to show that $x_{b/p}$ and $y_{b/p}$ converge to zero, and the closed-loop error dynamics of $\tilde{u}$, $\tilde{\psi}$, and $\tilde{r}$ also converge to zero. However, for for the sway velocity, since we consider general curved paths, the best hope is to be able to show global boundedness. 
Since the observer and the surge velocity dynamics converge independently of the other variables, we define two sets of variables: $\tilde{X}_1 \triangleq [y_{b/p},x_{b/p},\tilde{\psi},\tilde{r}]^T$ and $\tilde{X}_2 \triangleq [\tilde{x},\tilde{y},\tilde{V}_x,\tilde{V}_y,\tilde{u}]^T$ where $\tilde{X}_2$ contains all the variables that converge to zero independently of the others. Moreover, while the variables $\tilde X_1$ and $\tilde X_2$ should converge to zero, the sway velocity is required to remain bounded. 

To show that the error variables in $\tilde{X}_1$ and $\tilde{X}_2$ converge to zero, we consider the closed-loop system:
\begin{subequations} \label{COG-eq:35}
\begin{align} 
&\dot {\tilde{X}}_1 = \begin{bmatrix} \frac{- u_{td}  y_{b/p} }{ \sqrt{\Delta^2 + (y_{b/p} + g)^2} } - x_{b/p} \kappa(\theta)\dot{\theta} + G_1(\cdot) \\ \frac{- k_{\delta} x_{b/p}}{\sqrt{ 1 + x^2_{b/p} }} + y_{b/p}\kappa(\theta)\dot{\theta} \\ C_r \tilde{r} \\ - k_1 \tilde{r} - k_2C_r \tilde{\psi} \end{bmatrix} \label{COG-eq:35a}\\ 
&~+ \begin{bmatrix} \tilde{V}_{N} \\ \tilde{V}_{T} \\ G_2(\Delta,y_{b/p},x_{b/p},g,\hat{V}_N,\hat{V}_T,\tilde{V}_N,\tilde{V}_T)  \\ 
 - \frac{\partial r_d}{ \partial \tilde{\psi}} G_2(\cdot) - \tfrac{\partial r_d}{\partial \bsym{p}_{b/p}} \begin{bmatrix} \begin{smallmatrix}\tilde{V}_T \\ \tilde{V}_N \end{smallmatrix}\end{bmatrix} + \frac{\partial r_d}{\partial \tilde{x}} \tilde{V}_x + \frac{\partial r_d}{ \partial \tilde{y}} \tilde{V}_y  \end{bmatrix} \notag   \\ 
&\dot {\tilde{X}}_2 =  \begin{bmatrix} -k_{x_1} \tilde{x} - \tilde{V}_{x} \\  -k_{y_1} \tilde{y} - \tilde{V}_{y} \\  -k_{x_2} \tilde{x} \\  -k_{y_2} \tilde{y} \\ -k_u \tilde{u} \end{bmatrix}  \label{COG-eq:35b} \\ 
\begin{split} \label{COG-eq:35c}
&\dot{v}_r = X ( u_{rd} + \tilde{u} ) r_d ( h, y_{b/p}, x_{b/p}, \tilde{\psi}, \tilde{x}, \tilde{y}) \\ &~\hspace{5mm}+ X( u_{rd} + \tilde{u} ) \tilde{r} + Y( u_{rd} + \tilde{u} ) v_r   \end{split} 
\end{align} 
\end{subequations}
where
\begin{align}
\begin{split}
G_2(\cdot) = &~\tfrac{ 2\Delta}{\Delta^2 + \left( y_{b/p} + g \right)^2 } \left[ \tfrac{1}{2} + \tfrac{\partial g}{\partial c} y_{b/p} +  \tfrac{\partial g}{\partial b} \hat{V}_{N}  \right] \tilde{V}_{N} \\  
& + \tfrac{ 2\Delta^2 \partial g/\partial c - (y_{b/p} + g) }{\Delta^2 + \left( y_{b/p} + g \right)^2 } \tfrac{\partial \Delta}{\partial \bsym{p}_{b/p}} \begin{bmatrix} \begin{smallmatrix}\tilde{V}_T \\ \tilde{V}_N \end{smallmatrix}\end{bmatrix}.
\end{split}
\end{align}
Note that $G_2(\Delta,y_{b/p},x_{b/p},g,\hat{V}_N,\hat{V}_T,\tilde{V}_N,\tilde{V}_T)$ satisfies
\begin{gather*}
G_2(\Delta,y_{b/p},x_{b/p},g,\hat{V}_N,\hat{V}_T,0,0) = 0 \\
\Vert G_2(\cdot) \Vert \leq \zeta_2(\Delta)\Vert [\tilde{V}_T,\tilde{V}_N] \Vert,
\end{gather*}
where $\zeta_2(\Delta) > 0 $. \\ 
We design the time-varying look-ahead distance as   
\begin{align} \label{COG-eq:Delta}
 \Delta (x_{b/p},y_{b/p}) = & \sqrt{\mu + x^2_{b/p} + y^2_{b/p}} 
\end{align}
where $\mu > 0$ is a constant. Choosing $\Delta$ to depend on $x_{b/p}$ and $y_{b/p}$ is necessary to find a bounded value of $\mu$ to assure local boundedness of $v_r$ with respect to $\tilde{X}_2$ independently of $\tilde{X}_1$.
This shows that $G_2(\cdot)$ is zero when the perturbing variables, i.e. $\tilde{V}_T$ and $\tilde{V}_N$, are zero and $\zeta_2(\Delta)$ has at most linear growth with respect to $x_{b/p}$ and $y_{b/p}$. 

The following three steps are taken by formulating and proving three lemmas. For the sake of brevity in the main body of this paper, the proofs of the following lemmas are replaced by a sketch of each proof in the main body. The full proofs can be found in \citep{proofs-global}.

The first step in the stability analysis of \eqref{COG-eq:35} is to assure that the closed-loop system is forward complete and that the sway velocity $v_r$ remains bounded. Therefore, under the assumption that Condition \ref{COG-cond:Cr} is satisfied, i.e. $ C_r > 0 $, we take the following three steps:
\begin{enumerate}
\item First, we prove that the trajectories of the closed-loop system are forward complete. 
\item Then, we derive a necessary condition such that $v_r$ is locally bounded with respect to $( \tilde{X}_1, \tilde{X}_2 )$.
\item Finally, we establish that for a sufficiently big value of $\mu$, $v_r$ is locally bounded only with respect to $\tilde{X}_2$, i.e. independently of $\tilde{X}_1$. 
\end{enumerate}
\begin{lemma} [Forward completeness] \label{COG-lem1}
The trajectories of the closed-loop system \eqref{COG-eq:35} are forward complete.
\end{lemma}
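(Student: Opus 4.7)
The plan is to exploit the cascade-like structure of the closed-loop system \eqref{COG-eq:35} and rule out finite escape by showing that, after conditioning on earlier-bounded variables, each subsequent subsystem has a right-hand side of at most linear growth in its own state. First I would observe that $\tilde X_2$ is governed by the autonomous linear system \eqref{COG-eq:35b} whose matrix is block-diagonal and Hurwitz: the observer block was shown Hurwitz in Section \ref{COG-subsec:obs}, and the $\tilde u_r$ equation is the scalar Hurwitz system \eqref{COG-eq:utild}. Consequently $\tilde X_2(\cdot)$ is defined and bounded on $[0,\infty)$. This immediately gives that $u_r = u_{rd}+\tilde u_r$ stays in a compact interval, so $X(u_r), Y(u_r)$ are bounded, and the components $\tilde V_T, \tilde V_N$ are uniformly bounded through \eqref{COG-eq:Vtrans}.

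The next step is to deal with the position errors by considering $V_p \triangleq \tfrac12(x^2_{b/p}+y^2_{b/p})$. The key structural observation is that the cross-coupling terms $-x_{b/p}\kappa(\theta)\dot\theta$ in $\dot y_{b/p}$ and $y_{b/p}\kappa(\theta)\dot\theta$ in $\dot x_{b/p}$ cancel exactly in $\dot V_p$, regardless of the sign or magnitude of $\dot\theta$. The residual stabilising terms $-u_{td}y^2_{b/p}/\sqrt{\Delta^2+(y_{b/p}+g)^2}$ and $-k_\delta x^2_{b/p}/\sqrt{1+x^2_{b/p}}$ are non-positive, while the perturbation $y_{b/p}(G_1+\tilde V_N)+x_{b/p}\tilde V_T$ is of linear growth in $(x_{b/p},y_{b/p})$, since $G_1$ is uniformly bounded on $[0,\infty)$ (the arguments of its $\sin$, $\cos$ and $\arctan$ factors yield bounded values once $\tilde u_r$ is bounded). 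Hence $\dot V_p \le c_1 + c_2\sqrt{V_p}$, and the comparison lemma yields no finite escape for $(x_{b/p},y_{b/p})$.

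Given this, I would then handle $(\tilde\psi,\tilde r)$ using \eqref{COG-eq:dpsitilgl2} and \eqref{COG-eq:drtilgl2}. Because $C_r$ is bounded by Condition \ref{COG-cond:Cr} and the remark following it, and $\zeta_2(\Delta)$ grows only linearly in $(x_{b/p},y_{b/p})$, the forcing of $\dot{\tilde\psi}$ has already-controlled growth. For $\tilde r$, the remark after \eqref{COG-eq:tau_rgl} is invoked to argue that the partial derivatives $\partial r_d/\partial\tilde x,\partial r_d/\partial\tilde y,\partial r_d/\partial\tilde\psi,\partial r_d/\partial\bsym p_{b/p}$ are bounded rational/periodic expressions in variables that do not escape in finite time; therefore $\dot{\tilde r}$ is affine in $(\tilde\psi,\tilde r)$ with coefficients that are bounded on every compact interval, and Grönwall applies to $\tfrac12(\tilde\psi^2+\tilde r^2)$.

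The last piece is the sway dynamics \eqref{COG-eq:35c}. Writing $r = \tilde r + r_d$ and inspecting the explicit formula \eqref{COG-eq:rdgl}, every term of $r_d$ is either bounded (denominators are bounded away from zero thanks to $\Delta^2 \ge \mu > 0$ in \eqref{COG-eq:Delta}, and $u^2_{rd}+v^2_r \ge u^2_{rd}$), or enters linearly in $u_t = \sqrt{u^2_r+v^2_r}$. Hence $|r_d| \le c_3(1+|v_r|)$ plus a signal already shown to have no finite escape, so $\dot v_r$ is affine in $v_r$ with bounded coefficients on compact intervals, and a final Grönwall argument closes the proof. The main obstacle is the verification step underlying the remark after \eqref{COG-eq:tau_rgl}: one must carefully trace every factor of $r_d$ and its partials in \eqref{COG-eq:rdgl} to certify the required linear-growth/boundedness properties. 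The cancellation of the cubic cross-terms $\pm x_{b/p}y_{b/p}\kappa(\theta)\dot\theta$ in $\dot V_p$ is the key structural fact without which a Grönwall-type bound on $V_p$ would not be available, since $\dot\theta$ itself grows with $|v_r|$.
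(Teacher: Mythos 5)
Your overall strategy (cascade decomposition, linear-growth bounds, comparison/Gr\"onwall) is the right one, but the order in which you treat the subsystems creates a circularity, and it rests on a claim that is false. You handle $(x_{b/p},y_{b/p})$ in your second step and assert that the perturbation there has linear growth because ``$G_1$ is uniformly bounded on $[0,\infty)$.'' It is not: from \eqref{COG-eq:Ggl}, every term of $G_1$ except $\tilde u_r\sin(\psi-\gamma_p)$ carries the factor $u_{td}=\sqrt{u_{rd}^2+v_r^2}$, so $G_1$ grows linearly in $|v_r|$ (this is exactly what the bound \eqref{COG-eq:Gbnd} with the $u_{td}$-dependent gain $\zeta(u_{td})$ records). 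At the point where you invoke this, $v_r$ has not yet been shown to be forward complete --- that only happens in your final step, which in turn relies on $(x_{b/p},y_{b/p})$ and $(\tilde\psi,\tilde r)$ being forward complete, which relies on your second step. So the argument is circular: $(x_{b/p},y_{b/p})\Rightarrow(\tilde\psi,\tilde r)\Rightarrow v_r\Rightarrow G_1$ bounded $\Rightarrow(x_{b/p},y_{b/p})$.

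The paper breaks this loop by reversing the order: it first treats $(\tilde\psi,\tilde r,v_r)$ \emph{jointly} with the single Lyapunov function $V_1=\tfrac12(k_2\tilde\psi^2+\tilde r^2+v_r^2)$, using the fact that $r_d$ and the interconnection term $R(\cdot)$ admit bounds of the form $a(\mu,\beta_0)|v_r|+b(\mu,\beta_0)$ that are \emph{uniform} in $(x_{b/p},y_{b/p},\tilde\psi)$ (all fractional terms in \eqref{COG-eq:rdgl} are bounded in those variables, and $\tilde\psi$ enters only through sines and cosines). This yields $\dot V_1\le\alpha V_1+\beta$ and forward completeness of $(\tilde\psi,\tilde r,v_r)$ independently of the position errors. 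Only then are $(x_{b/p},y_{b/p})$ handled with $V_2=\tfrac12(x_{b/p}^2+y_{b/p}^2)$ --- your observation that the $\pm x_{b/p}y_{b/p}\kappa(\theta)\dot\theta$ cross-terms cancel is correct and is used there too --- treating $(v_r,\tilde\psi,\tilde r,\tilde V_N,\tilde V_T)$ as forward-complete inputs in the sense of \cite[Corollary 2.11]{angeli1999forward}. To repair your proof you must either adopt this ordering or otherwise establish forward completeness of $v_r$ (uniformly in the position errors) before bounding $\dot V_p$.
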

The proof of this lemma is given in the Appendix. The general idea is as follows. Forward completeness for \eqref{COG-eq:35b} is evident since this part of the closed-loop system consists of GES error dynamics. Using the forward completeness and in fact boundedness of \eqref{COG-eq:35b}, we can show forward completeness of \eqref{COG-eq:35c}, $\dot{\tilde{\psi}}$, and $\dot{\tilde{r}}$. Hence, forward completeness of \eqref{COG-eq:35} depends on forward completeness of $\dot{x}_{b/p}$ and $\dot{y}_{b/p}$. To show forward completeness of $\dot{x}_{b/p}$ and $\dot{y}_{b/p}$, we consider the $x_{b/p}$ and $y_{b/p}$ dynamics with $\tilde{X}_2$, $\tilde{\psi}$, $\tilde{r}$, and $v_r$ as inputs which allows us to show forward completeness of $\dot{x}_{b/p}$ and $\dot{y}_{b/p}$ according to \cite[Corollary 2.11]{angeli1999forward}. Consequently, all the states of the closed-loop system are forward complete, and hence the closed-loop system \eqref{COG-eq:35} is forward complete

\begin{lemma} [Boundedness near $(\tilde{X}_1, \tilde{X}_2)=0$] \label{COG-lem2}
The system \eqref{COG-eq:35c} is bounded near the manifold $(\tilde{X}_1, \tilde{X}_2)=0$ if and only if the curvature of $P$ satisfies the following condition:
\begin{equation} \label{COG-eq:curvlem2}
\kappa_{\max} \triangleq \max_{\theta \in P}\left| \kappa(\theta) \right| < \frac{ Y_{\min}}{2X_{\max} } ~~~~~ X_{max} \triangleq | X(u_r) |_\infty .
\end{equation}
\end{lemma}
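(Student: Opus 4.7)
The approach is to restrict the sway equation \eqref{COG-eq:35c} to the zero manifold $(\tilde{X}_1,\tilde{X}_2)=0$ and analyse the resulting one-dimensional nonautonomous ODE
\begin{equation*}
\dot{v}_r = X(u_{rd})\,r_d^\star(v_r,\theta)+Y(u_{rd})\,v_r,
\end{equation*}
where $r_d^\star$ denotes the restriction of $r_d$ from \eqref{COG-eq:rdgl} to $y_{b/p}=x_{b/p}=\tilde{\psi}=\tilde{x}=\tilde{y}=0$. Since Assumption \ref{COG-assum:Yur} gives $Y(u_{rd})\le -Y_{\min}$, the damping in this scalar equation can dominate only if the drive term $X(u_{rd})r_d^\star$ does not grow too fast in $|v_r|$; the task is to quantify this balance.

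The first step is to simplify $r_d^\star$ using the manifold identities. Four cancellations do most of the work: $G_1=0$ because $\tilde{\psi}=\tilde{u}_r=0$; $\partial\Delta/\partial x_{b/p}$ and $\partial\Delta/\partial y_{b/p}$ both vanish at the origin thanks to the form \eqref{COG-eq:Delta}; $\dot{\hat{V}}_x=\dot{\hat{V}}_y=0$ since $\tilde{x}=\tilde{y}=0$; and $b=y_{b/p}\hat{V}_N=0$, $c=\mu$, so $g=\hat{V}_N\sqrt{\mu/(u_{td}^2-\hat{V}_N^2)}$. These identities kill every term of \eqref{COG-eq:rdgl} proportional to $x_{b/p}$, $y_{b/p}$, $G_1$, or multiplied by a vanishing partial of $\Delta$, leaving only the $\kappa\dot{\theta}^\star$ block and a block involving $\dot{\hat{V}}_N$.

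The second step is asymptotic bookkeeping in $|v_r|$. Since $g=\mathcal{O}(|v_r|^{-1})$, one has $\cos(\chi^\star-\gamma_p)=\sqrt{\mu}/\sqrt{\mu+g^2}\to 1$, so $\dot{\theta}^\star=u_t\cos(\chi^\star-\gamma_p)+\hat{V}_T\sim u_t=\sqrt{u_{rd}^2+v_r^2}$. A short calculation gives $\partial g/\partial a=g/[2(u_{td}^2-\hat{V}_N^2)]=\mathcal{O}(|v_r|^{-3})$, which forces $C_r\to 1$. The delicate piece is $\dot{\hat{V}}_N=\kappa\hat{V}_T[\hat{V}_T-u_t\cos(\chi^\star-\gamma_p)]$, which grows linearly in $|v_r|$; but it enters $r_d^\star$ multiplied by $\sqrt{-ac}/(-a)=\sqrt{\mu}/\sqrt{u_{td}^2-\hat{V}_N^2}=\mathcal{O}(|v_r|^{-1})$, so the product stays bounded. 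Consequently $r_d^\star=-\kappa(\theta)u_t+\mathcal{O}(1)$, and the only growing contribution to the effective sway dynamics is $-X(u_{rd})\kappa(\theta)u_t$.

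The final step is a Lyapunov estimate with $V=v_r^2/2$:
\begin{equation*}
\dot{V}\leq -Y_{\min}v_r^2+X_{\max}\kappa_{\max}\,u_t|v_r|+C,
\end{equation*}
where $C$ absorbs all bounded contributions. Using $u_t\leq 2|v_r|$, which holds for $|v_r|\geq|u_{rd}|/\sqrt{3}$, the right-hand side becomes $-(Y_{\min}-2X_{\max}\kappa_{\max})v_r^2+C$, so $\kappa_{\max}<Y_{\min}/(2X_{\max})$ makes $\dot{V}$ negative outside a compact set and yields boundedness on, and by continuity near, the manifold. For the converse, I would specialize to $\theta$ with $|\kappa(\theta)|=\kappa_{\max}$ where the curvature-induced drive is worst-case and compare the drift of $v_r$ against a linear ODE $\dot{w}=\alpha w$ with $\alpha\ge 0$ whenever the inequality fails, forcing $v_r$ to escape every bounded set. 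The main obstacle I expect is step two: verifying that the apparently linear-in-$u_t$ contributions of $\dot{\hat{V}}_N$, $\partial g/\partial a$, and the $Y(u_r)v_r^2$ factor inside the $\partial g/\partial a$-block each cancel against their small prefactors, so that only $-\kappa u_t$ remains as a growing drive.
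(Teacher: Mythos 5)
Your proof is correct and follows essentially the same route as the paper's: a quadratic Lyapunov function $V=\tfrac{1}{2}v_r^2$, identification of the curvature term $-\kappa(\theta)u_t$ in $r_d$ as the only quadratically growing drive (after checking that the $g$, $\partial g/\partial a$, $\dot{\hat{V}}_N$ contributions are subquadratic and that the $\partial\Delta/\partial \bsym{p}_{b/p}$ terms vanish or cancel), leading to the same condition $2X_{\max}\kappa_{\max}<Y_{\min}$. The differences are cosmetic — the paper keeps its bounds valid in a whole neighbourhood by collecting the excess into a term $F_2(\cdot)v_r^2$ with $F_2$ vanishing on the manifold rather than invoking continuity, it retains the harmless linear-in-$|v_r|$ remainder that your constant $C$ silently absorbs, and, like you, it only sketches the ``only if'' direction.
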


The proof of this lemma is given in the Appendix. A sketch of the proof is as follows. The sway velocity dynamics \eqref{COG-eq:35c} are analysed using a quadratic Lyapunov function $V=1/2v^2_r$. It can be shown that the derivative of this Lyapunov function satisfies the conditions for boundedness when the solutions are on or close to the manifold where $(\tilde{X}_1, \tilde{X}_2)=0$. Consequently, \eqref{COG-eq:35c} satisfies the conditions of boundedness near $(\tilde{X}_1,\tilde{X}_2)=0$ as long as \eqref{COG-eq:curvlem2} is satisfied.
\begin{remark}
In the proof of Lemma \ref{COG-lem2} it is shown that choosing $\Delta (x_{b/p}) =\sqrt{\mu + x^2_{b/p}}$ as in \cite{moe2014path}, $v_{r}$ would grow linearly and unbounded with respect to the state $y_{b/f}$.
The necessity of choosing $\Delta$ as in \eqref{COG-eq:Delta}, i.e. $\Delta$ dependent also on $y_{b/f}$, is shown and justified.
\end{remark}
In Lemma \ref{COG-lem2} we show boundedness of $v_r$ for small values of $(\tilde{X}_1, \tilde{X}_2)$ to derive the bound on the curvature. However, locality with respect to $\tilde{X}_1$, i.e. the path-following errors and yaw angle and yaw rate errors, is not desirable, and in the next lemma boundedness independent of $\tilde{X}_1$ is shown under an extra condition on the constant $\mu$ from the definition \eqref{COG-eq:Delta} of the look-ahead distance $\Delta$.
\begin{lemma} [Boundedness near $\tilde{X}_2=0$] \label{COG-lem3}
The system \eqref{COG-eq:35c} is bounded near the manifold $ \tilde{X}_2 = 0$, independently of $\tilde{X}_1$, if we choose
\begin{gather} 
\mu > \frac{ 8 X_{\max} }{ Y_{\min} -2X_{\max} \kappa_{\max}} \label{COG-eq:mu} 
\end{gather}
where $ X_{max} = | X(u_r) |_\infty $ and $\kappa_{\max} = \max_{\theta \in P}\left| \kappa(\theta) \right| $.
\end{lemma}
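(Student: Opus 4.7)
The plan is to extend the Lyapunov estimate used in Lemma~\ref{COG-lem2} so that it no longer relies on $\tilde X_1$ being small. I would restrict attention to the invariant manifold $\tilde X_2 = 0$, on which $u_r = u_{rd}$, $\tilde V_T = \tilde V_N = 0$, and Assumption~\ref{COG-assum:Yur} yields $Y(u_r) \leq -Y_{\min}$. Using the same candidate $V = \tfrac{1}{2}v_r^2$, the closed-loop equation \eqref{COG-eq:35c} gives
\begin{equation*}
\dot V \;=\; Y(u_{rd})\,v_r^2 \;+\; X(u_{rd})\,r_d\,v_r \;+\; X(u_{rd})\,\tilde r\,v_r,
\end{equation*}
so the damping contributes at most $-Y_{\min}v_r^2$, and the remaining work is to produce an estimate of $r_d$ that is uniform in $\tilde X_1$.

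Next, I would go through the expression \eqref{COG-eq:rdgl} for $r_d$ and sort its summands into three classes. The first class contains occurrences of $\kappa(\theta)\,u_t\cos(\chi - \gamma_p)$ (and copies of it hidden inside $\dot\theta$); since $u_t = \sqrt{u_{rd}^2 + v_r^2}$, these terms grow at most linearly in $|v_r|$ and, after multiplication by $X(u_{rd})\,v_r$, contribute at most $X_{\max}\kappa_{\max}\,v_r^2$ to $\dot V$. The second class contains every summand in which $x_{b/p}$ or $y_{b/p}$ appears, but always divided by one of the factors $\Delta$, $\Delta^2$, $\Delta^2 + (y_{b/p}+g)^2$, $\sqrt{1+x_{b/p}^2}$, or $\sqrt{\Delta^2 + (y_{b/p}+g)^2}$. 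Using $\Delta \geq \sqrt{\mu}$ together with the trivial bounds $|x_{b/p}/\Delta|,|y_{b/p}/\Delta|\leq 1$, each such ratio is controlled by a universal constant times $1/\sqrt{\mu}$ (or $1/\mu$). The third class is globally bounded, independent of both $\tilde X_1$ and $v_r$. Combining these bounds with the fact, noted after Condition~\ref{COG-cond:Cr}, that $C_r^{-1}$ is bounded above, one arrives at an inequality of the form
\begin{equation*}
\dot V \;\leq\; -\,Y_{\min}\,v_r^2 \;+\; X_{\max}\kappa_{\max}\,v_r^2 \;+\; \frac{c_1}{\sqrt{\mu}}\,v_r^2 \;+\; c_2\,|v_r|,
\end{equation*}
with $c_1,c_2>0$ depending on $u_{rd}$, $V_{\max}$, $k_\delta$, and the (bounded) partial derivatives appearing in \eqref{COG-eq:rdgl}, but \emph{not} on $\tilde X_1$.

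Finally, I would ask the coefficient of $v_r^2$ to be strictly negative, i.e.\ $Y_{\min} - X_{\max}\kappa_{\max} - c_1/\sqrt{\mu} > 0$. Solving for $\mu$ and absorbing a factor of two (used in a Young inequality on the linear term $c_2|v_r|$) produces exactly the lower bound \eqref{COG-eq:mu}. With this choice of $\mu$ we obtain $\dot V \leq -\alpha v_r^2 + \beta$ for some $\alpha,\beta>0$, which gives boundedness of $v_r$ in a neighborhood of $\tilde X_2 = 0$ that is independent of $\tilde X_1$, as claimed.

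The main obstacle is not the Lyapunov mechanism itself but the bookkeeping in $r_d$: the expression \eqref{COG-eq:rdgl} has many summands, each of which must be shown to fall into one of the three classes above. The argument hinges crucially on the particular choice $\Delta = \sqrt{\mu + x_{b/p}^2 + y_{b/p}^2}$ rather than the choice $\Delta = \sqrt{\mu + x_{b/p}^2}$ used in \cite{moe2014path}: only with the full dependence on $y_{b/p}$ do the ratios involving $y_{b/p}$ reduce to $O(1/\sqrt{\mu})$, and only then can the quadratic perturbation of $\dot V$ be made small by enlarging a single scalar $\mu$.
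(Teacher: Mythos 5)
Your proposal follows essentially the same route as the paper's proof: the quadratic Lyapunov function $V=\tfrac12 v_r^2$, a term-by-term sorting of $r_d v_r$ into curvature-driven quadratic terms, $\Delta$-suppressed quadratic terms bounded via $\Delta\geq\sqrt{\mu}$, and sub-quadratic remainders vanishing on $\tilde X_2=0$, leading to the same negativity condition on the $v_r^2$ coefficient. The only bookkeeping discrepancy is the provenance of the factor $2$ multiplying $X_{\max}\kappa_{\max}$: in the paper it arises because the curvature term enters twice, through the bound $\bigl|-1+\Delta x_{b/p}/(\Delta^2+(y_{b/p}+g)^2)\bigr|\leq 2$, not from a Young inequality on the linear term.
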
 
The proof of this lemma is given in the Appendix. It follows along the same lines of the proof of Lemma \ref{COG-lem2}. That is, when the solutions are close to the manifold $\tilde{X}_2 = 0$,  rather than $(\tilde{X}_1, \tilde{X}_2)=0$, the boundedness can still be shown provided that \eqref{COG-eq:mu} is satisfied additionally to the conditions of Lemma \ref{COG-lem2}. 
\begin{theorem}
Consider a $\theta$-parametrised path denoted by $ P(\theta) \triangleq (x_p(\theta), y_p(\theta)) $, with the update law given by \eqref{COG-eq:teta}. Then under Condition \ref{COG-cond:Cr} and the conditions of Lemma \ref{COG-lem1}-\ref{COG-lem3}, the system \eqref{COG-eq:dynsys} with control laws \eqref{COG-eq:tauu} and \eqref{COG-eq:tau_rgl} and observer \eqref{COG-eq:obs} follows the path $P$, while maintaining $v_r$, $\tau_r$, and $\tau_u$ bounded. In particular, the origin of the system \eqref{COG-eq:35a}-\eqref{COG-eq:35b} is GAS and LES. 
\end{theorem}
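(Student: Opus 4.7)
My plan is to exploit the cascade structure of \eqref{COG-eq:35}. The subsystem \eqref{COG-eq:35b} governing $\tilde{X}_2$ is autonomous, linear, and block-diagonal with all blocks Hurwitz: the observer matrix in \eqref{COG-eq:obserr} has negative eigenvalues by the choice of positive gains, and $\tilde u$ obeys $\dot{\tilde u}=-k_u\tilde u$ from \eqref{COG-eq:utild}. Hence $\tilde{X}_2$ is GES at the origin and $\|\tilde{X}_2(t)\|$ is absolutely integrable along every trajectory. The $\tilde{X}_1$ dynamics in \eqref{COG-eq:35a} receive $\tilde{X}_2$ only through the second additive column (the perturbations $\tilde V_T,\tilde V_N,\tilde V_x,\tilde V_y$ and the function $G_2(\cdot)$), which vanishes when $\tilde{X}_2=0$. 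Forward completeness is given by Lemma~\ref{COG-lem1} and boundedness of $v_r$ by Lemma~\ref{COG-lem3}, so all prerequisites for a Panteley--Loria type cascade argument are in place once GAS of the nominal ($\tilde{X}_2=0$) $\tilde{X}_1$ system is established.

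For the nominal system I would split $\tilde{X}_1$ itself as an internal cascade. The attitude part $\dot{\tilde\psi}=C_r\tilde r$, $\dot{\tilde r}=-k_1\tilde r-k_2 C_r\tilde\psi$ is independent of $(x_{b/p},y_{b/p})$; taking $V_\psi=\tfrac{1}{2}k_2\tilde\psi^2+\tfrac{1}{2}\tilde r^2$ the $C_r$ cross terms cancel and
\begin{equation*}
\dot V_\psi = -k_1 \tilde r^2 \le 0,
\end{equation*}
so $(\tilde\psi,\tilde r)$ are bounded and $\int_0^\infty \tilde r^2 \, dt < \infty$. Since $C_r$ is bounded away from zero (Condition~\ref{COG-cond:Cr}) a Barbalat/LaSalle argument then forces $\tilde r\to 0$ and subsequently $\tilde\psi\to 0$. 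When $(\tilde\psi,\tilde r)=0$ the perturbation $G_1$ also vanishes, and on $V_{xy}=\tfrac{1}{2}(x_{b/p}^2+y_{b/p}^2)$ the cross terms $\pm x_{b/p}y_{b/p}\kappa(\theta)\dot\theta$ telescope, leaving
\begin{equation*}
\dot V_{xy} = -\frac{k_\delta x_{b/p}^2}{\sqrt{1+x_{b/p}^2}} - \frac{u_{td}\,y_{b/p}^2}{\sqrt{\Delta^2+(y_{b/p}+g)^2}},
\end{equation*}
which is negative definite since $u_{td} \ge u_{rd}-V_{\max}>0$ by Assumption~\ref{COG-assum:vel}. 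Hence the nominal $\tilde{X}_1$ origin is GAS.

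The delicate step is lifting this to GAS of the full cascade, since the terms coupling $\tilde{X}_2$ into \eqref{COG-eq:35a} involve partial derivatives of the complicated expression $r_d$. The key ingredient is the Remark following \eqref{COG-eq:drtilgl2}: all such partial derivatives are bounded in the $\tilde{X}_1$ variables, and $G_2$ has at most linear growth in $(x_{b/p},y_{b/p})$ with magnitude controlled by $\|\tilde V_T\|+\|\tilde V_N\|$. Together with forward completeness (Lemma~\ref{COG-lem1}) and exponential integrability of $\tilde{X}_2$, this meets the growth hypotheses of the Panteley--Loria cascade theorem and yields GAS of the origin of \eqref{COG-eq:35a}--\eqref{COG-eq:35b}. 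Boundedness of $v_r$ along the resulting bounded trajectories then follows from Lemma~\ref{COG-lem3}, and boundedness of $\tau_u$, $\tau_r$ in \eqref{COG-eq:tauu}, \eqref{COG-eq:tau_rgl} is immediate because every argument is bounded. Finally, LES is obtained by linearising at the origin: the Jacobian of \eqref{COG-eq:35a}--\eqref{COG-eq:35b} is block triangular, with $\tilde{X}_2$ already Hurwitz and the $\tilde{X}_1$ block itself block-triangular in $(x_{b/p},y_{b/p})$ and $(\tilde\psi,\tilde r)$; the position diagonal block has strictly negative diagonal entries $-k_\delta$ and $-u_{td}/\sqrt{\mu}$ with a skew-symmetric cross term, and the attitude diagonal block is Hurwitz whenever $C_r(0)>0$. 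The main obstacle, in my view, is precisely this cascade step, because checking the growth bounds on the interconnection requires a careful case-by-case inspection of the many factors in $r_d$ and its partial derivatives.
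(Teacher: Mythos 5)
Your proposal follows essentially the same route as the paper: GES of the $\tilde X_2$ subsystem, forward completeness (Lemma \ref{COG-lem1}) plus local boundedness near $\tilde X_2=0$ (Lemma \ref{COG-lem3}) to secure boundedness of $v_r$, then two nested Panteley--Loria cascades with exactly the Lyapunov functions $\tfrac12 k_2\tilde\psi^2+\tfrac12\tilde r^2$ and $\tfrac12(x_{b/p}^2+y_{b/p}^2)$, Barbalat plus persistency of excitation of $C_r$ for the attitude block, and boundedness of the partial derivatives of $r_d$ to verify the interconnection growth conditions. The one place you diverge is the LES step: you invoke a frozen-time Jacobian/eigenvalue check, which is not by itself conclusive for this non-autonomous system ($C_r$ depends on $v_r$, $u_r$, $\hat V_N$, $\Delta$, which do not all converge); the paper instead notes that the attitude nominal dynamics are linear time-varying, so GAS implies GES, and obtains LES of the position block from the comparison lemma applied to $V_{\bsym{p}_{b/p}}$. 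Your own Lyapunov/PE machinery already delivers the same conclusion, so this is a repairable imprecision rather than a structural flaw.
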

\begin{proof}
From the fact that the origin of \eqref{COG-eq:35b} is GES, the fact that the closed-loop system \eqref{COG-eq:35} is forward complete according to Lemma \ref{COG-lem1}, and the fact that solutions of \eqref{COG-eq:35c} are locally bounded near $\tilde{X}_2 = 0$ according to Lemma \ref{COG-lem3}, we can conclude that there is a finite time 
$ T > t_0 $ after which solutions of \eqref{COG-eq:35b} will be sufficiently close to $\tilde{X}_2 = 0$ to guarantee boundedness of $v_r$. 

Having established that $v_r$ is bounded we first analyse the cascade 
\begin{subequations} \label{COG-eq:clstab}
\begin{align} 
\begin{split}
&\begin{bmatrix} \dot{\tilde{\psi}} \\  \dot{\tilde{r}} \end{bmatrix}  =  \begin{bmatrix}  C_r \tilde{r} \\ - k_1 \tilde{r} - k_2 C_r \tilde{\psi} \end{bmatrix} \\ &\qquad+ 
\begin{bmatrix} G_2(\cdot)  \\ \frac{\partial r_d}{\partial [\tilde{x},\tilde{y}]^T} \tilde{\bsym{V}}_c - \frac{\partial r_d}{ \partial \tilde{\psi}} G_2(\cdot) - \frac{\partial r_d}{\partial \bsym{p}_{b/p}} \begin{bmatrix} \begin{smallmatrix}\tilde{V}_T \\ \tilde{V}_N \end{smallmatrix}\end{bmatrix} \end{bmatrix} \end{split} \label{COG-eq:clstaba}\\
&\begin{bmatrix}  \dot{\tilde{x}} \\ \dot{\tilde{y}} \\ \dot{ \tilde{V}}_x \\ \dot{ \tilde{V}}_y \\ \dot{\tilde{u}} \end{bmatrix} =  \begin{bmatrix} -k_{x_1} \tilde{x} - \tilde{V}_{x} \\  -k_{y_1} \tilde{y} - \tilde{V}_{y} \\  -k_{x_2} \tilde{x} \\  -k_{y_2} \tilde{y} \\ -k_u \tilde{u} \end{bmatrix}.  \label{COG-eq:clstabb}
\end{align} 
\end{subequations}
The perturbing system \eqref{COG-eq:clstabb} is GES as shown in Subsection \ref{COG-subsec:obs}. The interconnection term, i.e. the second matrix in \eqref{COG-eq:clstaba}, satisfies the linear growth criteria from \cite[Theorem~2]{panteley1998global}. More specifically, it does not grow with the states $\tilde{\psi}$ and $\tilde{r}$ since all the partial derivatives of $r_d$ and $G_2(\cdot)$ can respectively be bounded by constants and linear functions of $\tilde{V}_x$ and $\tilde{V}_y$. The nominal dynamics, i.e. the first matrix in \eqref{COG-eq:clstaba}, can be analysed with the following quadratic Lyapunov function
\begin{equation} \label{COG-eq:V1thm}
V_{(\tilde{r},\tilde{\psi})} = \frac{1}{2}\tilde{r}^2 + \frac{1}{2}k_2\tilde{\psi}^2
\end{equation}
whose derivative along the solutions of the nominal system is given by
\begin{equation} \label{COG-eq:dV1thm}
\dot{V}_{(\tilde{r},\tilde{\psi})} = k_2C_r\tilde{r}\tilde{\psi} - k_1\tilde{r}^2 -k_2C_r\tilde{\psi}\tilde{r} = -k_2 \tilde{r}^2 \leq 0
\end{equation}
which implies that $\tilde{r}$ and $\tilde{\psi}$ are bounded. The derivative of \eqref{COG-eq:dV1thm} is given by
\begin{equation}
\ddot{V}_{(\tilde{r},\tilde{\psi})} = -2k^2_1 \tilde{r}^2 - 2k_1k_2C_r \tilde{\psi}\tilde{r}
\end{equation}
which is bounded since $\tilde{r}$ and $\tilde{\psi}$ are bounded. This implies that \eqref{COG-eq:dV1thm} is a uniformly continuous function. 
Note that the nominal system is non autonomous, since $C_r$ depends on the time-varying signals $u_r, u_{rd}, \Delta, g, a$, which are well-defined due to the forward completeness property. We will thus apply Barbalat's lemma to further investigate the stability of the nominal system. We conclude
\begin{equation}
\lim_{t\rightarrow\infty} \dot{V}_{(\tilde{r},\tilde{\psi})} = \lim_{t\rightarrow\infty} -k_1\tilde{r}^2 = 0~\Rightarrow~\lim_{t \rightarrow \infty} \tilde{r} = 0.
\end{equation}
Since $C_r$ is persistently exciting, which follows from the fact that $C_r$ is upper bounded and lower bounded by a constant larger then zero, it follows from the expression of the nominal dynamics that
\begin{equation}
\lim_{t \rightarrow \infty} \tilde{r} = 0~\Rightarrow~\lim_{t \rightarrow \infty} \tilde{\psi} = 0.
\end{equation}
This implies that the system is globally asymptotically stable, and since the nominal dynamics are linear it follows that the nominal dynamics are globally exponentially stable. Consequently, from the above it follows that the cascade \eqref{COG-eq:clstab} is GES using \cite[Theorem~2]{panteley1998global} and \cite[Definition~2.2]{angeli2000characterization}.

We now consider the following dynamics
\begin{align} \label{COG-eq:clstab2}
\begin{split}
\begin{bmatrix} \dot{y}_{b/p} \\ \dot{x}_{b/p} \end{bmatrix}  =  &~\begin{bmatrix}  - u_{td}   \frac{ y_{b/p} }{ \sqrt{\Delta^2 + (y_{b/p} + g)^2} } - x_{b/p} \kappa(\theta)\dot{\theta} \\  - k_\delta \frac{x_{b/p}}{\sqrt{ 1 + x^2_{b/p} }} + y_{b/p}\kappa(\theta)\dot{\theta} \end{bmatrix} \\ &~+ \begin{bmatrix} \tilde{V}_{N} + G_1(\cdot) \\ \tilde{V}_{T} \end{bmatrix}.
\end{split}
\end{align}
Note that we can view the systems \eqref{COG-eq:clstab} and \eqref{COG-eq:clstab2} as a cascaded system where the nominal dynamics are formed by the first matrix of \eqref{COG-eq:clstab2}, the interconnection term is given by second matrix of \eqref{COG-eq:clstab2}, and the perturbing dynamics are given by \eqref{COG-eq:clstab}. As we have just shown, the perturbing dynamics are GES. Using \eqref{COG-eq:Gbnd} it is straightforward to verify that the interconnection term satisfies the conditions of \cite[Theorem~2]{panteley1998global}. We now consider the following Lyapunov function for the nominal system
\begin{equation}
V_{\bsym{p}_{b/p}} = \tfrac{1}{2}x^2_{b/p} + \tfrac{1}{2}y^2_{b/p}
\end{equation}
whose derivative along the solutions of the nominal system is given by
\begin{equation}
\dot{V}_{\bsym{p}_{b/p}} = \frac{- u_{td} y^2_{b/p} }{ \sqrt{\Delta^2 + (y_{b/p} + g)^2} } -  \frac{k_\delta x^2_{b/p}}{\sqrt{ 1 + x^2_{b/p} }}
\end{equation}
is negative definite. The nominal system is thus GAS. Moreover, since it is straightforward to verify that $\dot{V}_{\bsym{p}_{b/p}} \leq \alpha V_{\bsym{p}_{b/p}}$ for some constant $\alpha$ dependent on initial conditions, it follows from the comparison lemma (\cite[Lemma 3.4]{khalil2002nonlinear}) that the nominal dynamics are also LES. Consequently, the cascaded system satisfies the conditions of \cite[Theorem~2]{panteley1998global} and \cite[Lemma~8]{panteley1998exponential}, and therefore the cascaded system is GAS and LES. This implies that the origin of the error dynamics, i.e. $(\tilde{X}_1,\tilde{X}_2)=(0,0)$, is globally asymptotically stable and locally exponentially stable. 
\end{proof} 

\begin{remark}
Note that this proof uses the theory for cascaded systems which is an approach that has also been taken in most previous works concerning this topic. However, the cascaded argument alone would not hold without establishing forward completeness of the closed-loop solutions and some robustness properties with respect to some vanishing variables in the system, see Lemmas \ref{COG-lem1}, \ref{COG-lem2}, and \ref{COG-lem3}. This is a caveat in the stability proof of previous works which we have intended to fill here. Moreover, using the cascade \eqref{COG-eq:clstab} this proof shows that the system can be controlled by a yaw rate controller that does not depend on $\dot{\psi}_d$ and consequently does not depend on the ocean current.
\end{remark} 

\section{Case Study} \label{COG-sec:case}
This section presents two case studies that illustrate the theoretical results presented in this paper. 
In the first case study, the desired path is a sine curve while in the second case study the ship is requested to follow a path composed of two non-co-linear straight lines (zig-zag path).  Moreover, in the latter case, we assume that the ocean current changes direction and magnitude when the ship moves to the second straight-line segment.
In both the case studies, we use the parameters values of the underactuated surface vessel studied in \cite{caharija2014thesis}.The vessel considered in \cite{caharija2014thesis} is an offshore supply vessel equipped with a propeller for thrust and a rudder for yaw actuation. The control input is therefore given by the surge thrust $T_u$ and rudder angle $T_r$ which are allocated as in \eqref{COG_eq:relVelMod}-\eqref{COG_eq:B}. 

\subsection{Sinusoidal path}
The ocean current components are given by $V_x = -0.4~\munit{m/s}$ and $V_y = 1~\munit{m/s}$ and consequently $V_{\max} \approx 1.08~\munit{m/s}$. The desired relative surge velocity is chosen to be constant and set to $u_{rd}= 5~\munit{m/s}$, which means that Assumption \ref{COG-assum:vel} is satisfied. We want to remark that a surge speed of $5~\munit{m/s}$ may be considered as not purely low speed, which could oppose the assumption of linear damping that we made in \eqref{COG_eq:relVelMod}. However, in our simulations we use the damping parameters given in \cite[Appendix B]{caharija2014thesis} which are the result of a linear approximation of the damping term and the approximated linear damping is valid for $|u_{r}|<7~\munit{m/s}$.
Using the model's parameters given in \cite{caharija2014thesis}, we have $(Y_{\min})/(2X_{\max}) \approx 0.0667$. The observer is initialized as in \eqref{eq:estim} and the observer gains are selected as $k_{x_1} = k_{y_1} = 1$ and $k_{x_2} = k_{y_2} = 0.1$. The controller gains are selected as $k_{u_r} = 0.1$ for the surge velocity controller and $k_{1} = 40$ and $k_2 = 100$ for the yaw rate controller. In this first case study the vessel is required to follow the sinusoidal path $ P = \left\{ \bsym{p}_{b/p} \in \mathbb{R}^2 : y_{b/p} = 300\sin\left(\frac{\pi}{800}x_{b/p}\right) \right\} $. Consequently, the maximum curvature of the path is  $\max_{\bsym{p}_{b/p} \in P} |\kappa(\theta(x_p))| = 0.0087$. 
This implies that we satisfy  our constraint on the curvature given by Lemma \ref{COG-lem2} since $\max_{\bsym{p}_{b/p} \in P} |\kappa(\theta(x_p))| < (Y_{\min})/(2X_{\max})$. The required value for $\mu$ can be calculated as suggested in Lemma \ref{COG-lem3} to obtain $\mu > 987.3~\munit{m}$, which can be satisfied by choosing $\mu = 1000~\munit{m}$. 
The initial conditions are
\begin{equation}
[u_r,v_r,r,x,y,\psi]^T = [0,0,0,10,200,\pi/2]^T.
\end{equation}
The resulting motion of the ship are shown in Figure \ref{COG-fig:Glo_path}. The dashed blue line is the trajectory of the vessel and the red sine curve is the reference. The yellow ship shows the orientation of the ship each $100~\munit{s}$. From Figure \ref{COG-fig:Glo_path} it can clearly be seen that the orientation of the ship is not tangent to the sine curve, which is as expected and desired for underactuated vessels in order to compensate for the ocean current. 
\begin{figure}[tbh]
\centering
\includegraphics[width=0.5\columnwidth]{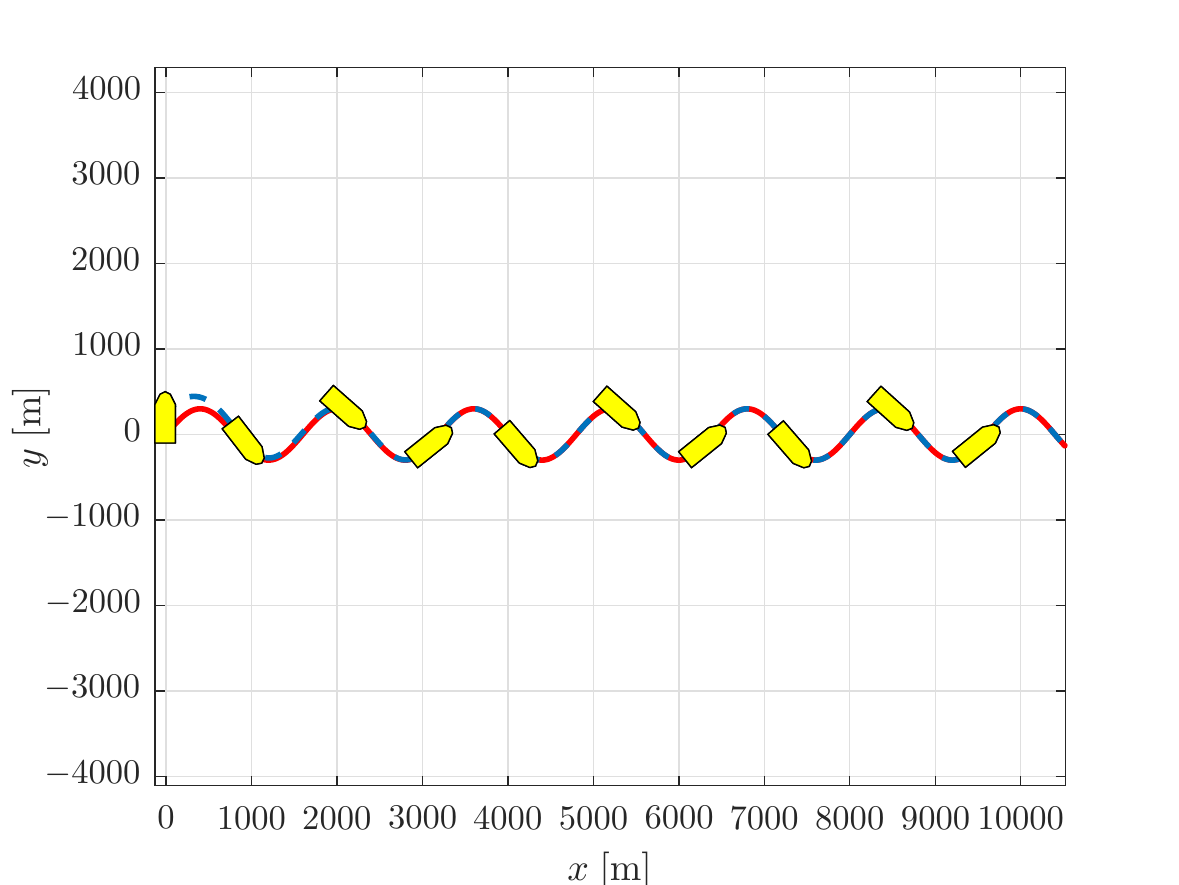}{}
\caption[Path following of the desired path.]{Path following of the desired sinusoidal path in the $x-y$-plane using the proposed controller.}\label{COG-fig:Glo_path}
\end{figure}
The path-following errors in the tangential direction, $x_{b/p}$, and in the normal direction, $y_{b/p}$ can be seen in the top plot of Figure \ref{COG-fig:Glo_sub_cp}, from which it can clearly be seen that the path-following errors converge to zero after a transient period. A detail of the last portion of the simulation is given to illustrate that the errors converge to zero. The estimates of the ocean current components obtained from the ocean current observer are given in the second plot from the top in Figure \ref{COG-fig:Glo_sub_cp}. 
The previous plot illustrates the conservativeness of the bound $2V_{\max} < u_{rd}(t),~\forall t$ derived in the analysis of the observer-error dynamics in Subsection \ref{COG-subsec:obs}. The sway velocity $v_r$ is plotted in the third plot of Figure \ref{COG-fig:Glo_sub_cp}. This plot shows that due to the curvature of the path, the sway velocity does not converge to zero but remains bounded and follows a periodic motion induced by the periodicity of the desired path which has a curvature that both non-constant and non-zero. The relative surge velocity is plotted in the fourth plot of Figure \ref{COG-fig:Glo_sub_cp}. This plot clearly shows the exponential convergence of the velocity as it moves to the desired value of $u_{rd} = 5~\munit{m/s}$. Especially interesting is the coupling of the relative surge velocity with the value of $C_r$ in Condition \ref{COG-cond:Cr}. The parameter $C_r$ is plotted in the bottom plot of Figure \ref{COG-fig:Glo_sub_cp}. From this plot, it can clearly be seen that $C_r$ is bounded away from zero throughout the motion. 
\begin{figure}[tbh]
\centering
\includegraphics[width=0.5\columnwidth]{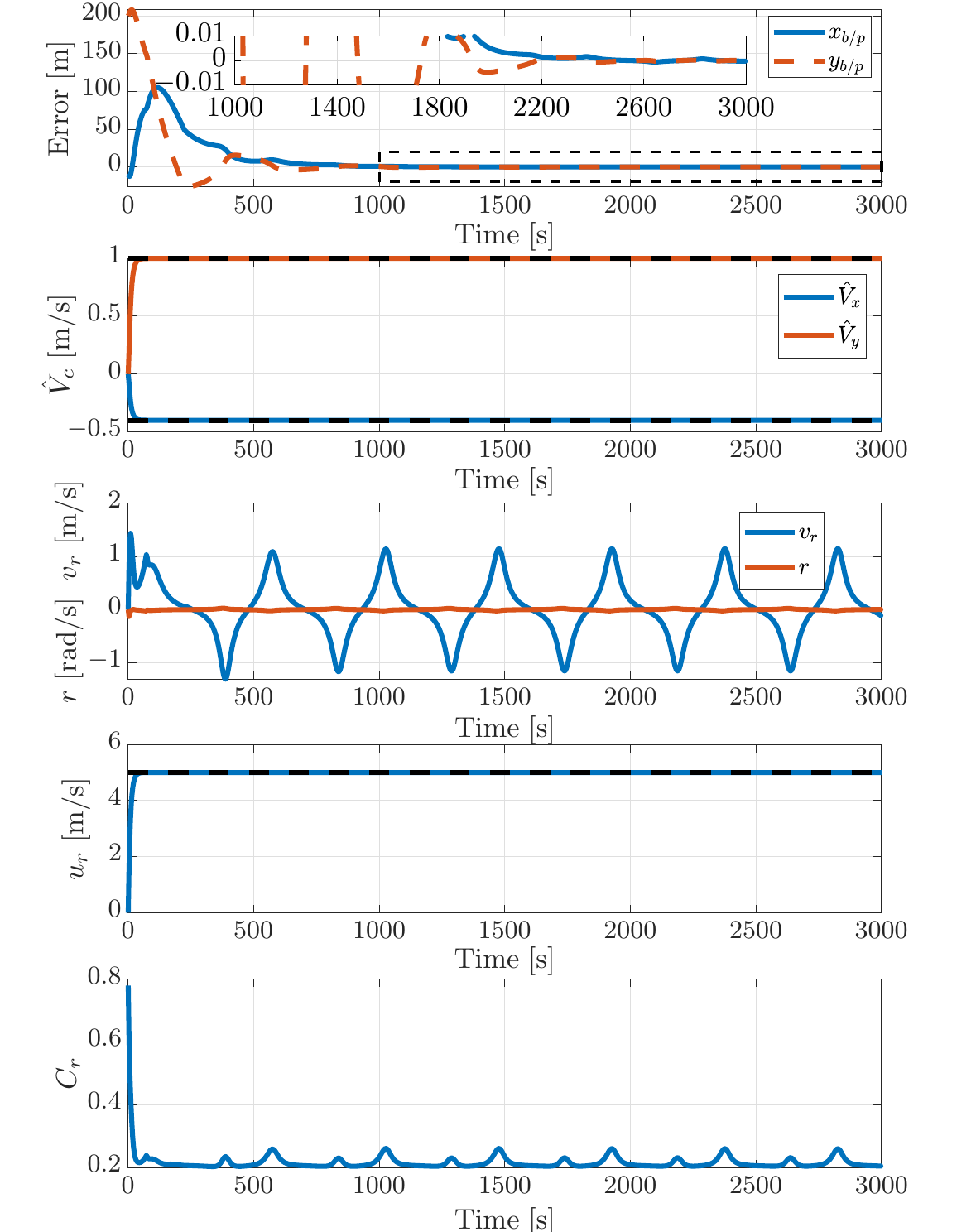}{}
\caption[The time evolution of the path-following errros, current estimates, sway velocity, surge velocity, and the parameter $C_r$.]{Time evolution of the path-following errors (top), current estimates (second), sway velocity, yaw rate (third), surge velocity (fourth), and the parameter $C_r$ (bottom).}\label{COG-fig:Glo_sub_cp}
\end{figure}  
\subsection{Zig-zag path}
In this case study, we assume that the desired path is composed of two linked non-co-linear straight lines, namely $P = P_1 \cup P_2$ where 
$$ P_1 \triangleq \left\{ \bsym{p}_{b/p} \in \mathbb{R}^2 : y_{b/p} =  f(x_{b/p}) = x_{b/p},~~x_{b/p} \in [0,1500~\munit{m}] \right\} $$ 
and 
$$P_2 \triangleq \left\{ \bsym{p}_{b/p} \in \mathbb{R}^2 : y_{b/p} =  f(x_{b/p}) = -x_{b/p}+3000~\munit{m},~~x_{b/p} \in [1500~\munit{m},2600~\munit{m}] \right\}. $$
When the vessel switches between the paths the ocean current switches as well. That is, when $\bsym{p}_{b/p} \in P_1$, the ocean current components are given by $V_{x1}~=~-0.4~\munit{m/s}$ and $V_{y1} = 1.0~\munit{m/s}$ and when the ship maneuvers to follow $P_2$, $\bsym{p}_{b/p} \in P_2$, the ocean current switches its value to $V_{x2} = -1.0~\munit{m/s}$ and $V_{y2} = 0.7~\munit{m/s}$. The simulation with switching current is included to illustrate that despite the assumption on the ocean current in Assumption \ref{COG-assum:current}, the theoretical analysis and the stability conclusions remain valid when the ocean current is constant at sufficiently large intervals of time and varies only over a finite number of sufficiently small intervals of time. This will allow the observer to adapt and estimate the new value of the ocean during the interval of time where it is constant, and hence the guidance law will be able to control the ship towards the path. With the aforementioned values of the ocean current we have $V_{\max} \approx 1.22~\munit{m/s}$. The desired relative surge velocity is constant with $u_{rd}= 5~\munit{m/s}$, which means that Assumption \ref{COG-assum:vel} is satisfied. We remark that also for this simulation we use the damping parameters given in \cite{caharija2014thesis} which give a linear damping term valid for speeds $| u_{r} |<7~\munit{m/s}$.
The observer's initial condition and gains are the same as in the sinusoidal-path case. The control gains are  $k_{u_r} = 0.1$, $k_{1} = 40$ and $k_2 = 100$. In this second case study, the vessel is required to follow the path $P \triangleq P_1 \cup P_2$. Consequently, the curvature of the path is equal to zero for almost all $(x_{b/p}, y_{b/p}) \in P$. Hence, our constraint on the curvature $|\kappa(\theta(x_p))| < (Y_{\min})/(2X_{\max}) \approx 0.0087$ is trivially satisfied almost everywhere since $(Y_{\min})/(2X_{\max})>0$. Furthermore, according to Lemma \ref{COG-lem3}, we need to satisfy $\mu > 460.9~\munit{m}$, which is the case when $\mu = 500~\munit{m}$. The initial conditions are
\begin{equation}
[u_r,v_r,r,x,y,\psi]^T = [0,0,0,10,200,\pi/2]^T.
\end{equation}
The resulting motion of the ship can be seen in Figure \ref{COG-fig:Glo_pathzz}. The dashed blue line is the trajectory of the vessel and the red line is the reference. The yellow ship shows the orientation of the ship each $100~\munit{s}$. From Figure \ref{COG-fig:Glo_pathzz}, it can be seen that the ocean current prevents a tangential orientation of the ship with respect to the desired path, which is expected in order to compensate for the ocean current. 
\begin{figure}[tbh]
\centering
\includegraphics[width=0.5\columnwidth]{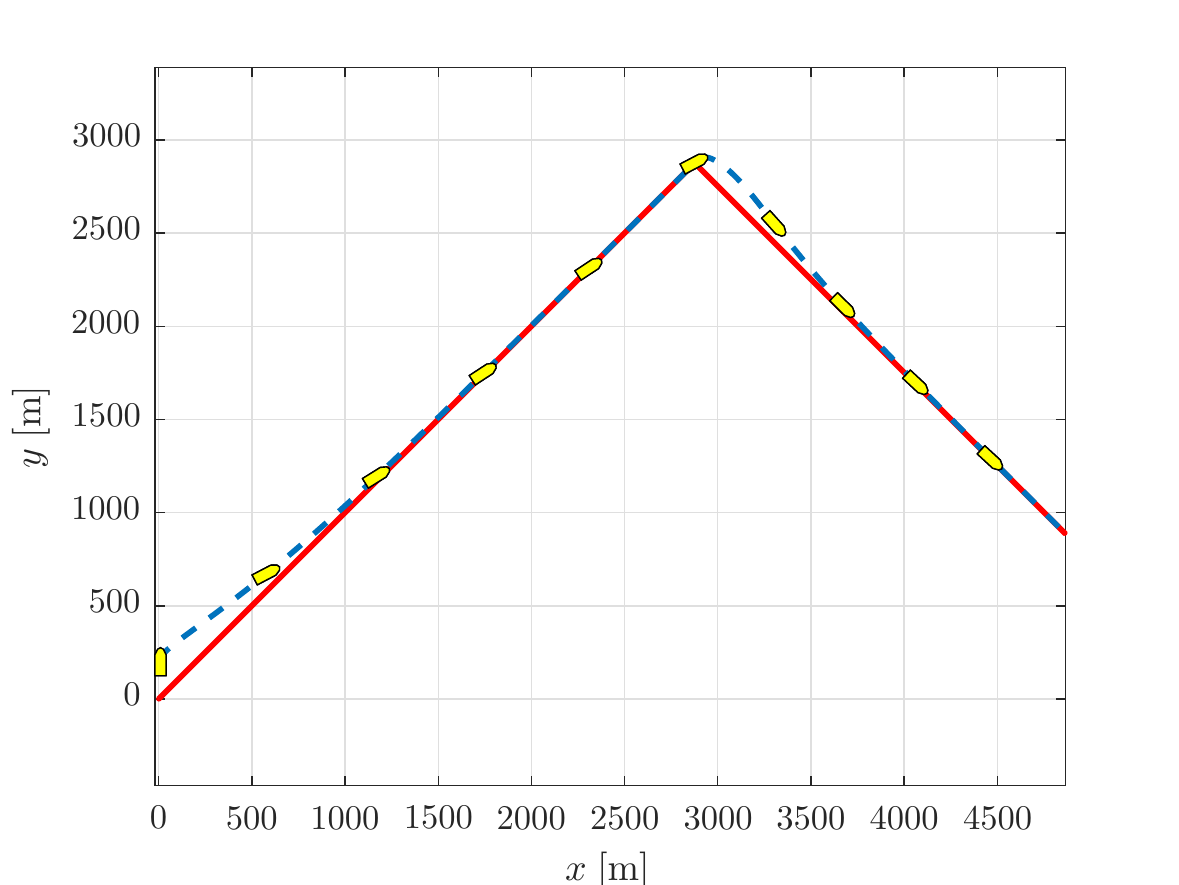}{}
\caption[Path following of the desired path.]{Path following of the desired zig-zag path in the $x-y$-plane using the proposed controller.} \label{COG-fig:Glo_pathzz}
\end{figure}
Furthermore, the estimates for the ocean current components obtained from the ocean current observer are given in the second plot from the top in Figure \ref{COG-fig:Glo_sub_cpzz}. From this plot, it can clearly be seen that the estimates converge to the desired value $(V_{x1}, V_{y1})$. Moreover, when the ocean current switches to $(V_{x2}, V_{y2})$, the observer shows a short transient behavior before converging to the new value. However, we preserve the conservativeness of the bound $2V_{\max} < u_{rd}(t),~\forall t$ derived in the analysis of the observer error dynamics in Subsection \ref{COG-subsec:obs}. Furthermore, during the transient interval during which the observer provides incorrect current estimates to the controller and also due to the switch of the desired-path direction, the path-following errors in the tangential and the normal directions $x_{b/p}$ and $y_{b/p}$, respectively, are affected, as it can be seen from the top plot of Figure \ref{COG-fig:Glo_sub_cpzz}. However, it can clearly be seen that the path-following errors converge back to zero after the transient. A detail of the last portions of the simulation is given to illustrate that the errors converge to zero. The sway velocity $v_r$ is plotted in the third plot of Figure \ref{COG-fig:Glo_sub_cpzz}. This plot shows that due to the fact that the curvature is zero almost everywhere on the path, the sway velocity converges to zero while the ship moves along the first segment of the path. When the ocean current and the path direction switch, the sway velocity displays a short transient behavior  before converging to the origin again. The relative surge velocity is plotted in the fourth plot of Figure \ref{COG-fig:Glo_sub_cpzz} and shows the exponential convergence of the velocity to $u_{rd} = 5~\munit{m/s}$. Furthermore, the evolution of $C_r$, involved in Condition \ref{COG-cond:Cr}, is plotted in the bottom plot of Figure \ref{COG-fig:Glo_sub_cpzz} where we see that $C_r$ is bounded away from zero throughout the motion. 
\begin{figure}[tbh]
\centering
\includegraphics[width=0.5\columnwidth]{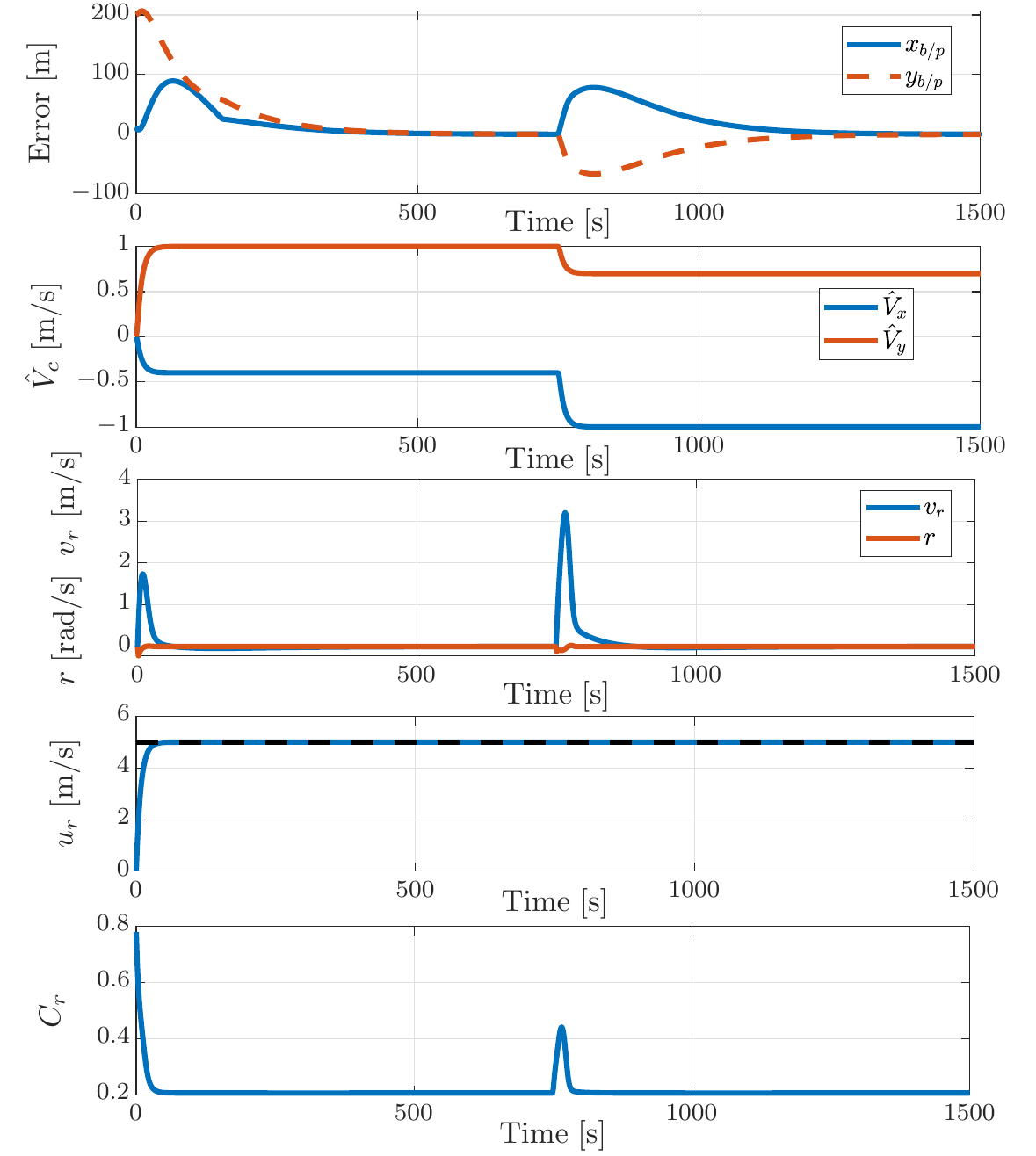}{}
\caption[The time evolution of the path-following errros, current estimates, sway velocity, surge velocity, and the parameter $C_r$.]{Time evolution of the path following errors (top), current estimates (second), sway velocity, yaw rate (third), surge velocity (fourth), and magnitude of $C_r$ (bottom).} \label{COG-fig:Glo_sub_cpzz}
\end{figure}

\section{Conclusions} \label{COG-sec:cncl}
In this work curved-path following for underactuated marine vehicles in the presence of constant ocean currents has been considered. We propose a control approach where the path is parametrised by a path variable with a globally defined update law. The vehicle is steered using a line-of-sight like guidance law where the lookahead-distance depends on the path-following errors. To compensate for the unknown ocean currents, the guidance law is aided by an ocean current observer. The closed-loop system with the controllers and observer was analysed. This was done by first showing that under certain conditions we have boundedness of the underactuated sway velocity dynamics. Since the paths are curved, the sway velocity will not converge to zero, and boundedness is thus what we aim for. It was then shown that if these conditions are satisfied such that the sway velocity is bounded, the path-following errors are globally asymptotically stable and locally exponentially stable.

\newpage

\bibliography{biblio}
\bibliographystyle{plainnat}

\appendix

\section*{Appendix}

\subsection*{Proof of Lemma \ref{COG-lem1}}

Consider the following part of the global closed-loop system:
\begin{small}
\begin{subequations} \label{COG-eq:36}
\begin{align} 
\begin{split}
\begin{bmatrix} \dot{\tilde{\psi}} \\  \dot{\tilde{r}} \end{bmatrix} = & \begin{bmatrix}   C_r \tilde{r} \\ - k_1 \tilde{r} - k_2C_r \tilde{\psi} \end{bmatrix} \\ &\hspace{-1cm}+ 
\underbrace{\begin{bmatrix}  G_2(\Delta,y_{b/p},x_{b/p},g,\hat{V}_N,\hat{V}_T,\tilde{V}_N,\tilde{V}_T)  \\ 
 - \tfrac{\partial r_d}{ \partial \tilde{\psi}} G_2(\cdot) - \tfrac{\partial r_d}{\partial y_{b/p}} \tilde{V}_{N} - \tfrac{\partial r_d}{\partial x_{b/p}} \tilde{V}_{T} + \tfrac{\partial r_d}{\partial \tilde{x}} \tilde{V}_x + \tfrac{\partial r_d}{ \partial \tilde{y}} \tilde{V}_y   \end{bmatrix} }_{R(h, y_{b/p}, \delta_x, \tilde{\psi}, \tilde{x}, \tilde{y})} \end{split} \label{COG-eq:36a} \\
\dot{v}_r = &  X ( u_{rd} + \tilde{u} ) r_d ( \cdot) + X( u_{rd} + \tilde{u} ) \tilde{r} + Y( u_{rd} + \tilde{u} ) v_r.  \label{COG-eq:36b}
\end{align}
\end{subequations}
\end{small}
From the boundedness of the vector $ [ \tilde{X}^T_2 , \kappa(\theta), u_{rd} , \dot{u}_{rd}, V_T, V_N ]^T $ we know that  $ \left\| [ \tilde{X}^T_2 , \kappa(\theta), u_{rd} , \dot{u}_{rd}, V_T, V_N ]^T \right\| \leq \beta_0 $, and from \eqref{COG-eq:rdgl} we can conclude the existence of positive functions $ a_{r_d}(\cdot) $, $ b_{r_d}(\cdot) $, $ a_{R}(\cdot) $, and $ b_R(\cdot) $ which are all continuous in their arguments and are such that such the following inequalities hold:
\begin{align} \label{eq:born1}
 \left| r_d(\cdot) \right| \leq &~a_{r_d}(\mu, \beta_0) \left| v_r \right|+ b_{r_d} (\mu, \beta_0)	
\end{align}
and,
\begin{align} \label{COG-eq:born2}
\left\| R(\cdot) \right\|	\leq a_R( \mu, \beta_0 ) \left| v_r \right| + b_R ( \mu, \beta_0 )
\end{align}
Then we choose the following Lyapunov function candidate:
\begin{align} \label{COG-eq:V}
V_1(\tilde{\psi}, \tilde{r}, v_r) = \tfrac{1}{2} \left( k_2\tilde{\psi}^2 + \tilde{r}^2 + v^2_r  \right)
\end{align}
whose time derivative along the solutions of \eqref{COG-eq:36} is
\begin{align} \label{COG-eq:V1}
\begin{split}
\dot{V}_1(\cdot) = &~k_2C_r \tilde{r} \tilde{\psi} - k_1 \tilde{r}^2 - k_2C_r \tilde{r} \tilde{\psi} + [\tilde{\psi}~~\tilde{r}] R(\cdot) \\ &+ Y(u_{rd}+ \tilde{u}) v^2_r + X(u_{rd} + \tilde{u}) \tilde{r} v_r \\& + X(u_{rd}+\tilde{u}) r_d(\cdot) v_r  
\end{split}
\end{align}
Using Young's inequality we note that
\begin{align} \label{COG-eq:dVfc}
\begin{split}
\dot{V}_1(\cdot)  \leq &~k_1 \tilde{r}^2 + \tilde{\psi}^2 + \tilde{r}^2 + R^2(\cdot) + Y(u_{rd} + \tilde{u}) v^2_r \\ &+ \left| X( u_{rd} + \beta_0 ) \right|  \left( \tilde{r}^2 + v^2_r \right) \\ &  +\left| X (u_{rd} + \beta_0) \right| \left( r^2_d(\cdot) + v^2_r \right)  \\ \leq &~ \alpha V_1 + \beta, ~~\alpha \geq 0,~~\beta \geq 0
\end{split}
\end{align}
Note that since the differential inequality \eqref{COG-eq:dVfc} is scalar we can invoke the comparison lemma \citet[Lemma 3.4]{khalil2002nonlinear}. From {\citet[Lemma 3.4]{khalil2002nonlinear}} we know that the solutions of differential inequality \eqref{COG-eq:dVfc} are bounded by the solutions of the linear system:
\begin{equation}
\dot{x} = \alpha x + \beta 
\end{equation}
which has solutions
\begin{equation}
x(t) = \tfrac{\|x(t_0)\|\alpha + \beta}{\alpha}e^{\alpha(t-t_0)} - \tfrac{\beta}{\alpha}
\end{equation}
Hence, from {\citet[Lemma 3.4]{khalil2002nonlinear}} we have that 
\begin{equation}
V_1(\cdot) \leq \tfrac{\|V_1(t_0)\|\alpha + \beta}{\alpha}e^{\alpha(t-t_0)} - \tfrac{\beta}{\alpha}
\end{equation}
which shows the solutions of $V_1(\cdot)$ are defined up to $t_{\max} = \infty$ and consequently from \eqref{COG-eq:V} it follows that the solutions of $\tilde{\psi}$, $\tilde{r}$, and $v_r$ must be defined up to $t_{\max} = \infty$. Hence, the solutions of \eqref{COG-eq:36} satisfy the definition of forward completeness presented in \citet{angeli1999forward} and we can conclude forward completeness of trajectories of \eqref{COG-eq:36}. 

The forward completeness of trajectories of the global closed-loop system now depends on forward completeness of of $\dot{y}_{b/p}$ and $\dot{x}_{b/p}$ from \eqref{COG-eq:35a}. We can conclude forward completeness of $\dot{y}_{b/p}$ and $\dot{x}_{b/p}$ by considering the Lyapunov function
\begin{equation} \label{COG-eq:Vfc2}
V_{2} = \tfrac{1}{2} x^2_{b/p} + \tfrac{1}{2} y^2_{b/p}.
\end{equation}
The time derivative of \eqref{COG-eq:Vfc2} is given by
\begin{align}
\begin{split}
\dot{V}_{2} &= x_{b/p}\dot{x}_{b/p} + y_{b/p}\dot{y}_{b/p} \\
&\leq -u_{td} \tfrac{y^2_{b/p}}{\sqrt{\Delta^2+(y_{b/p}+g)^2}} - \tfrac{ k_\delta x^2_{b/p}}{\sqrt{1+x^2_{b/p}}} \\
& + (G_1(\cdot)+\tilde{V}_N)y_{b/p} + \tilde{V}_T x_{b/p} \\
&\leq (G_1 + \tilde{V}_N)y_{b/p} + \tilde{V}_T x_{b/p}
\end{split}
\end{align}
where using the bound on $G_1(\cdot)$ from \eqref{COG-eq:Gbnd} and Young's inequality we obtain
\begin{align} \label{COG-eq:dVfc2}
\dot{V}_2 &\leq V_2 + \tfrac{1}{2}\left(\zeta^2(u_{td})\Vert [\tilde{\psi},\tilde{r}]^T \Vert^2 + \tilde{V}^2_N + \tilde{V}^2_T\right) \\
&\leq V_2 + \sigma_2(v_r,\tilde{\psi},\tilde{r},\tilde{V}_N,\tilde{V}_T)
\end{align}
with $\sigma_2(\cdot) \in \mc{K}_{\infty}$. Consequently, if we view the arguments of $\sigma_2(\cdot)$ as input to the $x_{b/p}$ and $y_{b/p}$ dynamics, then \eqref{COG-eq:dVfc2} satisfies {\citet[Corollary 2.11]{angeli1999forward}} and hence $\dot{x}_{b/p}$ and $\dot{y}_{b/p}$ are forward complete. Note that the arguments of $\sigma_2(\cdot)$ are all forward complete and therefore fit the definition of an input signal given in \citet{angeli1999forward}. We have now shown forward completeness of \eqref{COG-eq:35a} and \eqref{COG-eq:35c} and since \eqref{COG-eq:35b} is GES is is trivially forward complete. We can therefore claim forward completeness of the entire closed-loop system \eqref{COG-eq:35} and the proof of Lemma \ref{COG-lem1} is complete. \hfill $\blacksquare$

\subsection*{Proof of Lemma \ref{COG-lem2}} 

Recall the sway velocity dynamics \eqref{COG-eq:35c}:
\[  \dot{v}_r = X(\tilde{u} + u_{rd}) (r_d + \tilde{r}) + Y(u_{rd}+\tilde{u}) v_r ,~~Y(u_{rd}) < 0  \]
Consider the following Lyapunov function candidate:
\begin{align} \label{COG-eq:lyap}
	V_3(v_r) = \tfrac{1}{2} v^2_r
\end{align}
The derivative of \eqref{COG-eq:lyap} along the solutions of \eqref{COG-eq:35c} is given by
\begin{align} \label{COG-eq:dlyapLem2}
\begin{split}
\dot{V}_3 = &~v_r \dot{v}_r = v_r X(u_{rd} + \tilde{u}) r_d + X(u_{rd} + \tilde{u}) v_r \tilde{r} \\
&\qquad \quad + Y(u_{rd} + \tilde{u}) v^2_r \\ 
\leq &~X(u_{rd}) r_d v_r + a_x \tilde{u} r_d v_r + X(u_{rd}) v_r \tilde{r} \\
& + a_x \tilde{u} v_r \tilde{r} + a_y \tilde{u} v^2_r + Y(u_{rd}) v^2_r  
\end{split}
\end{align}
where we used the fact that: 
\begin{align} \label{COG-eq:YX}
 Y(u_{r}) = &~a_y u_r + b_y \\
 X(u_r) = &~a_x u_r + b_x    
\end{align}
The term $r_dv_r$ is given by
\begin{align} \label{COG-eq:rdvrLem2full}
\begin{split}
r_d v_r = &-\tfrac{v_r}{C_r} \left[ \kappa(\theta) \left( u_t \cos(\psi + \beta - \gamma_p(\theta)) + k_\delta \tfrac{x_{b/p}}{\sqrt{ 1 + x^2_{b/p} }} +\hat{V}_T \right) \right. \\ &+ \tfrac{ Y(u_r) v_r  u_{rd} - \dot{u}_{rd} v_r } { u^2_{rd} + v^2_r } + \tfrac{ \Delta }{\Delta^2 + \left( y_{b/p} + g \right)^2 } \left[  \dot{\hat{V}}_{N} \tfrac{ b + \sqrt{b^2-ac} }{-a}  \right.\\ 
&+  \tfrac{\partial g}{\partial b} \left( 2 \dot{\hat{V}}_{N} y_{b/p} \right)+\tfrac{\partial g}{\partial a} \left( 2 \hat{V}_{N} \dot{\hat{V}}_{N} - 2 u_{rd} \dot{u}_{rd} - 2 v_r Y(u_r) v_r \right)\\ 
&+ \left[ 1 + \tfrac{\partial g}{\partial c} 2 y_{b/p} + \tfrac{\partial g}{\partial b} 2 \hat{V}_{N} \right] \left( \tfrac{- u_{td} y_{b/p} }{ \sqrt{\Delta^2 + (y_{b/p} + g)^2} } + G_1(\cdot) - x_{b/p} \kappa(\theta)\dot{\theta} \right) \\ 
&+ \left. \tfrac{\partial g}{\partial c} 2 \Delta \left[ \tfrac{\partial \Delta}{\partial x_{b/p}}  \left( - k_\delta \tfrac{x_{b/p}}{\sqrt{ 1 + x^2_{b/p} }} + y_{b/p}\kappa(\theta)\dot{\theta} \right) \right.\right. \\ 
&+ \left.\left. \tfrac{\partial \Delta}{\partial y_{b/p}} \left( - u_{td}   \tfrac{ y_{b/p} }{ \sqrt{\Delta^2 + (y_{b/p} + g)^2} } + G_1(\cdot) - x_{b/p} \kappa(\theta)\dot{\theta} \right) \right] \right] \\
&- \tfrac{y_{b/p} + g}{\Delta^2 + (y_{b/p} + g)^2} \left[ \tfrac{\partial \Delta}{\partial x_{b/p}}  \left( - k_\delta \tfrac{x_{b/p}}{\sqrt{ 1 + x^2_{b/p} }} + y_{b/p}\kappa(\theta)\dot{\theta} \right) \right. \\  
&+ \left.\left. \tfrac{\partial \Delta}{\partial y_{b/p}} \left( - u_{td}  \tfrac{ y_{b/p} }{ \sqrt{\Delta^2 + (y_{b/p} + g)^2} } + G_1(\cdot) - x_{b/p} \kappa(\theta)\dot{\theta}\right) \right] \right] \\ 
\end{split}
\end{align}
We now introduce a term $F(\tilde{X}_1, \tilde{X}_2, \Delta, V_{T}, V_{T}, u_{rd}, v_r)$ to collect all the terms that grow linearly with $v_r$ and the terms that grow quadratically with $v_r$ but vanish when $\tilde{X}_1$ and $\tilde{X}_2$ are zero. Consequently we rewrite \eqref{COG-eq:rdvrLem2full} to obtain
\begin{align} \label{COG-eq:rdvrLem2full2}
\begin{split}
r_dv_r = &-\tfrac{v_r}{C_r} \left[1 + \tfrac{ \Delta x_{b/p} }{ \Delta^2 + \left( y_{b/p} + g \right)^2 } \right] \kappa(\theta) \left( u_t \cos(\psi + \beta - \gamma_p(\theta))  \right)\\ 
&- \tfrac{1}{C_r} \left( \tfrac{ u_{rd}}{ u^2_{rd} + v^2_r } - \tfrac{2\Delta v_r}{\Delta^2 + (y_{b/p} + g)^2} \tfrac{\partial g }{\partial a}\right)Y(u_r)v^2_r \\
&+ F(\tilde{X}_1, \tilde{X}_2, \Delta, V_{T}, V_{T}, u_{rd}, v_r) 
\end{split}
\end{align}
where
\begin{align}
\begin{split}
F(\cdot) = &-\tfrac{v_r}{C_r} \left[ \kappa(\theta) \left( k_\delta \tfrac{x_{b/p}}{\sqrt{ 1 + x^2_{b/p} }} +\hat{V}_T \right) - \tfrac{\dot{u}_{rd} v_r } { u^2_{rd} + v^2_r } \right. \\ & + \tfrac{ \Delta }{\Delta^2 + \left( y_{b/p} + g \right)^2 } \left[  \dot{\hat{V}}_{N} \tfrac{ b + \sqrt{b^2-ac} }{-a} +  \tfrac{\partial g}{\partial b} \left( 2 \dot{\hat{V}}_{N} y_{b/p} \right) \right.\\ 
&+2\frac{\partial g}{\partial a} \left( \hat{V}_{N} \dot{\hat{V}}_{N} - u_{rd} \dot{u}_{rd} \right)- x_{b/p}\kappa(\theta)\left(\tfrac{k_\delta x_{b/p}}{\sqrt{ 1 + x^2_{b/p} }} + \hat{V}_T \right)\\ 
&+ \left[ \tfrac{\partial g}{\partial c} 2 y_{b/p} + \tfrac{\partial g}{\partial b} \left( 2 \hat{V}_{N} \right) \right] \left( \tfrac{- u_{td} y_{b/p} }{ \sqrt{\Delta^2 + (y_{b/p} + g)^2} } + G_1(\cdot) - x_{b/p} \kappa(\theta)\dot{\theta} \right) \\ 
&- \left. \tfrac{\partial g}{\partial c} 2 \Delta \left[\tfrac{\partial \Delta}{\partial x_{b/p}}   \tfrac{k_\delta x_{b/p}}{\sqrt{ 1 + x^2_{b/p} }} + \tfrac{\partial \Delta}{\partial y_{b/p}} \left( \tfrac{ u_{td}   y_{b/p} }{ \sqrt{\Delta^2 + (y_{b/p} + g)^2} } + G_1(\cdot) \right) \right] \right] \\
&+ \tfrac{y_{b/p} + g}{\Delta^2 + (y_{b/p} + g)^2} \left[ \tfrac{\partial \Delta}{\partial x_{b/p}} \tfrac{k_\delta x_{b/p}}{\sqrt{ 1 + x^2_{b/p} }} \right. \\  &\hspace{3.5cm}+ \left.\left. \tfrac{\partial \Delta}{\partial y_{b/p}} \left(  \tfrac{ u_{td}  y_{b/p} }{ \sqrt{\Delta^2 + (y_{b/p} + g)^2} } + G_1(\cdot) \right)\right] \right]
\end{split}
\end{align}
Note here that using our definition of $\Delta$ in \eqref{COG-eq:Delta} all the terms in $r_dv_r$ with partial derivatives of $\Delta$ multiplied by $\dot{\theta}$ are cancelled due to skew-symmetry. It is straightforward to verify that the function $F(\cdot)$ satisfies the following inequality:
\begin{align} \label{COG-eq:F}
\left|	F(\cdot) \right| \leq F_2 (\tilde{X}_1, \tilde{X}_2, \Delta, V_{T}, V_{N}, u_{rd}) v^2_r + F_1(\tilde{X}_1, \tilde{X}_2, \Delta, V_{T}, V_{N}, u_{rd}) \left| v_r \right|
\end{align}
where $F_{1,2}(\cdot)$ are positive functions continuous in their arguments with:
\begin{align} \label{COG-eq:F2}
F_2( 0,0,\Delta, V_{T}, V_{N}, u_{rd} ) = 0.
\end{align}
Consequently, using \eqref{COG-eq:rdvrLem2full2} the term $r_d v_r$ can be bounded as a function of $v_r$ as follows
\begin{align} \label{COG-eq:rdvrLem2bound}
\begin{split}
r_d v_r \leq &~\sqrt{u^2_r + v^2_r}\left| \tfrac{v_r}{C_r} \right| \left| \kappa(\theta) \right| \left|\left[-1 + \tfrac{ \Delta x_{b/p} }{ \Delta^2 + \left( y_{b/p} + g \right)^2 } \right] \right|+ \left| F(\cdot) \right| \\ 
&- \tfrac{1}{C_r} \left( \tfrac{ u_{rd}}{ u^2_{rd} + v^2_r } - \tfrac{2\Delta v_r}{\Delta^2 + (y_{b/p} + g)^2} \tfrac{\partial g }{\partial a}\right)Y(u_r)v^2_r \\ 
\leq &~\left| \tfrac{v^2_r}{C_r} \right| \left| \kappa(\theta) \right| \left| \left[-1 + \tfrac{ \Delta x_{b/p} }{ \Delta^2 + \left( y_{b/p} + g \right)^2 } \right] \right|+ \left| F(\cdot) \right| \\ 
&+ \left| \tfrac{v_r}{C_r} \right| \left| \kappa(\theta) \right| \left|u_r\right|  \left| \left[-1 + \tfrac{ \Delta x_{b/p} }{ \Delta^2 + \left( y_{b/p} + g \right)^2 } \right] \right| \\ 
&- \tfrac{1}{C_r} \left( \tfrac{ u_{rd}}{ u^2_{rd} + v^2_r } - \tfrac{2\Delta v_r}{\Delta^2 + (y_{b/p} + g)^2} \tfrac{\partial g }{\partial a}\right)Y(u_r)v^2_r
\end{split}
\end{align}

\begin{remark} \label{COG-rem:Delta}
The necessity for the choice of $\Delta$ as in \eqref{COG-eq:Delta} becomes evident from \eqref{COG-eq:rdvrLem2full2}. The choice of $\Delta$ constant would make all partial derivatives of $\Delta$ equal to zero. However, from $v_r/C_rx_{b/p}\kappa(\theta)\dot{\theta}$ we obtain a term of the form
\begin{equation} \label{COG-eq:RemLem21}
\tfrac{v^2_r}{C_r}\kappa(\theta)\tfrac{\Delta^2 x_{b/p}}{(\Delta^2 + (y_{b/f}+g)^2)^{3/2}}
\end{equation} 
which grows quadratically in $v_r$ with a gain that cannot be bounded independent of $x_{b/f}$ if $\Delta$ is independent of $x_{b/f}$. Therefore, boundedness of $v_r$ cannot be shown independently of $x_{b/f}$. With the choice of $\Delta = \sqrt{\mu^2+x^2_{b/p}}$ as proposed in \citet{moe2014path}, the partial derivatives with respect to $y_{b/p}$ would be zero. The term in \eqref{COG-eq:RemLem21} would now be upper-bounded by one. However, a new term would then be introduced from the partial derivative of $\Delta$
\begin{equation}
\tfrac{v^2_r}{C_r}\tfrac{\partial \Delta}{\partial x_{b/p}}\kappa(\theta)\tfrac{\Delta y_{b/p}(y_{b/p}+g)}{(\Delta^2 + (y_{b/f}+g)^2)^{3/2}}
\end{equation} 
where it should be noted that this term can grow unbounded in $y_{b/p}$ near the manifold where $g=-(y_{b/p}+1)$. Hence, the growth of this quadratic term in $v_r$ cannot be upper-bounded independent of $y_{b/p}$. 
\end{remark}

To avoid the issues describe in Remark \ref{COG-rem:Delta}, we choose $\Delta$ as defined in \eqref{COG-eq:Delta}. Using the definition of $\Delta(x_{b/p},y_{b/p})$ given in \eqref{COG-eq:Delta} it is straightforward to verify that
\begin{align} \label{COG-eq:rdvrLem2}
\begin{split}
r_d v_r \leq &~\left| \tfrac{v^2_r}{C_r} \right| \left| \kappa(\theta) \right| \left| \left[-1 + \tfrac{ \Delta x_{b/p} }{ \Delta^2 + \left( y_{b/p} + g \right)^2 } \right] \right|+ \left| F(\cdot) \right| \\  &+ \left| \tfrac{v_r}{C_r} \right| \left| \kappa(\theta) \right| \left|u_r\right|  \left| \left[-1 + \tfrac{ \Delta x_{b/p} }{ \Delta^2 + \left( y_{b/p} + g \right)^2 } \right] \right| \\ 
&- \tfrac{1}{C_r} \left( \tfrac{ u_{rd}}{ u^2_{rd} + v^2_r } - \tfrac{2\Delta v_r}{\Delta^2 + (y_{b/p} + g)^2} \tfrac{\partial g }{\partial a}\right)Y(u_r)v^2_r \\
\leq &~2\left| \tfrac{v^2_r}{C_r} \right| \left| \kappa(\theta) \right| + 2\left|u_r\right|\left| \tfrac{v_r}{C_r} \right| \left| \kappa(\theta) \right| + \left| F(\cdot) \right| \\ 
&- \tfrac{1}{C_r} \left( \tfrac{ u_{rd}}{ u^2_{rd} + v^2_r } - \tfrac{2\Delta v_r}{\Delta^2 + (y_{b/p} + g)^2} \tfrac{\partial g }{\partial a}\right)Y(u_r)v^2_r
\end{split}
\end{align}
When substituting \eqref{COG-eq:rdvrLem2} in \eqref{COG-eq:dlyapLem2} we obtain
\begin{align} \label{COG-eq:dotvr}
\begin{split}
\dot{V}_3 = v_r \dot{v}_r \leq &~\tfrac{1}{C_r}\Big[ 2\left| X(u_{rd}) \right| \left| \kappa(\theta) \right| + Y(u_{rd}) \Big] v^2_r + a_y \tilde{u} v^2_r + a_x \tilde{u} v_r \tilde{r}  
\\ &+X(u_{rd}) \left(F(\cdot) + 2\left|u_r\right|\left| \tfrac{v_r}{C_r} \right|\right) + a_x \tilde{u} r_d v_r + X(u_{rd}) v_r \tilde{r} 
\end{split}   
\end{align}

Consequently, on the manifold where $(\tilde{X}_1,\tilde{X}_2)=0$ we have
\begin{align} \label{COG-eq:Vdotmanlem2}
\begin{split}
\dot{V}_3 &\leq \tfrac{1}{C^*_r} \Big(2X_{\max}\left| \kappa(\theta) \right| + Y_{\min}\Big)v^2_r + X(u_{rd}) F_1 ( 0,0, \Delta, V_T, V_N, u_{rd})|v_r|
\end{split}
\end{align}
where $C^*_r (v_r, x_{b/p}, y_{b/p}, \Delta , V_N, u_{rd} ) = C_r( v_r,x_{b/p}, y_{b/p}, \Delta, \hat{V}_N = V_N , u_r = u_{rd} )$. Boundedness of \eqref{COG-eq:Vdotmanlem2} is guaranteed as long as 
\begin{align} \label{COG-eq:lem2c1}
2X_{\max} \left| \kappa(\theta) \right| + Y_{\min} < 0
\end{align} 
Hence, satisfaction of \eqref{COG-eq:curvlem2} renders the quadratic term in \eqref{COG-eq:Vdotmanlem2} negative and since the quadratic term is dominant for sufficiently large $v_r$, \eqref{COG-eq:Vdotmanlem2} is negative definite for sufficiently large $v_r$. If $\dot{V}_3$ is negative for sufficiently large $v_r$ this implies that $V_3$ decreases for sufficiently large $v_r$. Since $V_3 = 1/2v^2_r$, a decrease in $V_3$ implies a decrease in $v^2_r$ and by extension in $v_r$. Therefore, $v_r$ cannot increase above a certain value and $v_r$ is bounded near the manifold where $(\tilde{X}_1,\tilde{X}_2)=0$.  

\begin{remark}
Note that $C^*_r (v_r, y_{b/f} , \Delta , V_{N}, u_{rd} )$ can be found independently of $y_{b/p}$ and $x_{b/p}$ since the terms in $C_r$ are bounded with respect to these variables. 
\end{remark}

Consequently, close to the manifold where $(\tilde{X}_1,\tilde{X}_2)=0$ the sufficient and necessary condition for local boundedness of \eqref{COG-eq:35c} is the following:
\begin{align}
 2X_{\max} \left| \kappa(\theta) \right| + Y_{\min} < 0
\end{align} 
which is satisfied if and only if the condition in Lemma \ref{COG-lem2} is satisfied. This completes the proof of Lemma \ref{COG-lem2}. \hfill $\blacksquare$

\subsection*{Proof of Lemma \ref{COG-lem3}} 

Recall the sway velocity dynamics \eqref{COG-eq:35c}:
\[  \dot{v}_r = X(\tilde{u} + u_{rd}) (r_d + \tilde{r}) + Y(u_{rd}+\tilde{u}) v_r ,~~Y(u_{rd}) < 0  \]
Consider the following Lyapunov function candidate:
\begin{align} \label{COG-eq:lyap2}
	V_3(v_r) = \tfrac{1}{2} v^2_r
\end{align} 
The derivative of \eqref{COG-eq:lyap2} along the solutions of \eqref{COG-eq:35c} is given by
\begin{align} \label{COG-eq:dlyap2}
\begin{split}
\dot{V}_3 = &~v_r \dot{v}_r = v_r X(u_{rd} + \tilde{u}) r_d + X(u_{rd} + \tilde{u}) v_r \tilde{r} + Y(u_{rd} + \tilde{u}) v^2_r \\ \leq &~X(u_{rd}) r_d v_r + a_x \tilde{u} r_d v_r + X(u_{rd}) v_r \tilde{r} + a_x \tilde{u} v_r \tilde{r} + a_y \tilde{u} v^2_r + Y(u_{rd}) v^2_r  
\end{split}
\end{align}
where we used the fact that: 
\begin{align} \label{COG-eq:YX2}
 Y(u_{r}) = &~a_y u_r + b_y \\
 X(u_r) = &~a_x u_r + b_x    
\end{align}
The term $r_d v_r$ is given by:
\begin{align} \label{COG-eq:rdvrLem3full}
\begin{split}
r_d v_r = &-\tfrac{v_r}{C_r} \left[ \kappa(\theta) \left( u_t \cos(\psi + \beta - \gamma_p(\theta)) + k_\delta \tfrac{x_{b/p}}{\sqrt{ 1 + x^2_{b/p} }} +\hat{V}_T \right) \right. \\ &+ \tfrac{ Y(u_r) v_r  u_{rd} - \dot{u}_{rd} v_r } { u^2_{rd} + v^2_r } + \tfrac{ \Delta }{\Delta^2 + \left( y_{b/p} + g \right)^2 } \left[  \dot{\hat{V}}_{N} \tfrac{ b + \sqrt{b^2-ac} }{-a}  \right.\\ 
&+  \tfrac{\partial g}{\partial b} \left( 2 \dot{\hat{V}}_{N} y_{b/p} \right) 
+\tfrac{\partial g}{\partial a} \left( 2 \hat{V}_{N} \dot{\hat{V}}_{N} - 2 u_{rd} \dot{u}_{rd} - 2 v_r Y(u_r) v_r \right)\\ 
&+ \left[ 1 + \tfrac{\partial g}{\partial c} 2 y_{b/p} + \tfrac{\partial g}{\partial b} 2 \hat{V}_{N} \right] \left(    \tfrac{- u_{td} y_{b/p} }{ \sqrt{\Delta^2 + (y_{b/p} + g)^2} } + G_1(\cdot) - x_{b/p} \kappa(\theta)\dot{\theta} \right) \\ 
&+ \left. \tfrac{\partial g}{\partial c} 2 \Delta \left[ \tfrac{\partial \Delta}{\partial x_{b/p}}  \left( - k_\delta \tfrac{x_{b/p}}{\sqrt{ 1 + x^2_{b/p} }} + y_{b/p}\kappa(\theta)\dot{\theta} \right) \right.\right. \\ 
&+ \left.\left. \tfrac{\partial \Delta}{\partial y_{b/p}} \left( - u_{td}   \tfrac{ y_{b/p} }{ \sqrt{\Delta^2 + (y_{b/p} + g)^2} } + G_1(\cdot) - x_{b/p} \kappa(\theta)\dot{\theta} \right) \right] \right] \\
&- \tfrac{y_{b/p} + g}{\Delta^2 + (y_{b/p} + g)^2} \left[ \tfrac{\partial \Delta}{\partial x_{b/p}}  \left( - k_\delta \tfrac{x_{b/p}}{\sqrt{ 1 + x^2_{b/p} }} + y_{b/p}\kappa(\theta)\dot{\theta} \right) \right. \\  
&+ \left.\left. \tfrac{\partial \Delta}{\partial y_{b/p}} \left( - u_{td}  \tfrac{ y_{b/p} }{ \sqrt{\Delta^2 + (y_{b/p} + g)^2} } + G_1(\cdot) - x_{b/p} \kappa(\theta)\dot{\theta}\right) \right]\right] \\ 
\end{split}
\end{align}
We can now collect the terms that have less than quadratic growth in $v_r$ and/or vanish when $\tilde{X}_2 = 0$.
\begin{align}
\begin{split}
r_dv_r= & -\tfrac{v_r}{C_r} \kappa(\theta) \left( \sqrt{u^2_r + v^2_r} \cos(\psi + \beta - \gamma_p(\theta))  \right) 
\\ &+ \tfrac{v_r}{C_r} \tfrac{ \Delta x_{b/p} }{ \Delta^2 + \left( y_{b/p} + g \right)^2 } \left( \kappa(\theta)  \sqrt{u^2_r+v^2_r} \cos(\psi+\beta-\gamma_p)  \right) 
\\ &- \tfrac{v_r}{C_r}  \tfrac{ \Delta }{\Delta^2 + \left( y_{b/p} + g \right)^2 } \left( - u_{td}  \tfrac{ y_{b/p} } { \sqrt{\Delta^2 + (y_{b/p} + g)^2 } } + G_1(\cdot)  \right) 
\\ &+ \tfrac{v_r}{C_r} \tfrac{ y_{b/p} + g }{ \Delta^2 + (y_{b/p}+g)^2 } \tfrac{\partial \Delta}{\partial y_{b/p}}  \left( - u_{td}  \tfrac{y_{b/p}}{\sqrt{\Delta^2 + (y_{b/p} + g)^2}} + G_1( \cdot ) \right) \\ 
&- \tfrac{1}{C_r} \left( \tfrac{ u_{rd}}{ u^2_{rd} + v^2_r } - \tfrac{2\Delta v_r}{\Delta^2 + (y_{b/p} + g)^2} \tfrac{\partial g }{\partial a}\right)Y(u_r)v^2_r
\\ &+ G(\tilde{X}_1, \tilde{X}_2, \Delta, V_{T}, V_{N}, u_{rd}, v_r) 
\end{split}
\end{align}
where,
\begin{align}
\begin{split}
G(\cdot)\triangleq &-\tfrac{v_r}{C_r} \left[ \kappa(\theta) \left( k_\delta \tfrac{x_{b/p}}{\sqrt{ 1 + x^2_{b/p} }} +\hat{V}_T \right) -  \tfrac{k_\delta x_{b/p}}{\sqrt{ 1 + x^2_{b/p} }}  \right. 
\\ &- \tfrac{\dot{u}_{rd} v_r } { u^2_{rd} + v^2_r } - \tfrac{y_{b/p} + g}{\Delta^2 + (y_{b/p} + g)^2} \tfrac{\partial \Delta}{\partial x_{b/p}} \\
&+ \tfrac{ \Delta }{\Delta^2 + \left( y_{b/p} + g \right)^2 } \left[  \dot{\hat{V}}_{N} \tfrac{ b + \sqrt{b^2-ac} }{-a} +  \tfrac{\partial g}{\partial b} \left( 2 \dot{\hat{V}}_{N} y_{b/p} \right) \right.\\
&- \tfrac{\partial g}{\partial c} 2 \Delta \left[\tfrac{\partial \Delta}{\partial x_{b/p}} \tfrac{ k_\delta x_{b/p}}{\sqrt{ 1 + x^2_{b/p} }} + \tfrac{\partial \Delta}{\partial y_{b/p}} \left( \tfrac{u_{td} y_{b/p} }{ \sqrt{\Delta^2 + (y_{b/p} + g)^2} } + G_1(\cdot) \right) \right] \\ 
&+ 2\left[\tfrac{\partial g}{\partial c} y_{b/p} + \tfrac{\partial g}{\partial b} \hat{V}_{N} \right] \left( \tfrac{- u_{td} y_{b/p} }{ \sqrt{\Delta^2 + (y_{b/p} + g)^2} } + G_1(\cdot) - x_{b/p} \kappa(\theta)\dot{\theta} \right) \\ 
&+ \left.\left. 2\frac{\partial g}{\partial a} \left( \hat{V}_{N} \dot{\hat{V}}_{N} - u_{rd} \dot{u}_{rd} \right) - x_{b/p}\kappa(\theta)\left(\tfrac{k_\delta x_{b/p}}{\sqrt{ 1 + x^2_{b/p} }} + \hat{V}_T \right) \right] \right]
\end{split}
\end{align}
where $G(\cdot)$ is the function introduced to collect the terms that have less than quadratic growth in $v_r$ and/or vanish when $\tilde{X}_2 = 0$. Note here that using our definition of $\Delta$ in \eqref{COG-eq:Delta} all the terms in $r_dv_r$ with partial derivatives of $\Delta$ multiplied by $\dot{\theta}$ are canceled due to skew-symmetry. We can now find the following bound on \eqref{COG-eq:rdvrLem3full}
\begin{align} \label{COG-eq:rdvrboundlem3}
\begin{split}
r_dv_r \leq & \left| \tfrac{v_r}{C_r} \right| \left| \kappa(\theta) \right| \sqrt{u^2_r + v^2_r}  \left| \tfrac{ \Delta x_{b/p} }{ \Delta^2 + \left( y_{b/p} + g \right)^2 } - 1 \right| \\ &+ \left| \tfrac{v_r}{C_r} \right| \left| \tfrac{1}{\Delta} \right|  \left( 4 \sqrt{u^2_r + v^2_r} + \left| \tilde{u} \right| \right) \\
&+ \left| \tfrac{v_r}{C_r} \right| \left| \tfrac{y_{b/p} + g}{\Delta^2 + (y_{b/p}+g)^2} \right|  \left( 4 \sqrt{u^2_r + v^2_r} + \left| \tilde{u} \right| \right) + \left| G(\cdot) \right| \\ 
&- \tfrac{1}{C_r} \left( \tfrac{ u_{rd}}{ u^2_{rd} + v^2_r } - \tfrac{2\Delta v_r}{\Delta^2 + (y_{b/p} + g)^2} \tfrac{\partial g }{\partial a}\right)Y(u_r)v^2_r
 \\ \leq & \left| \tfrac{v^2_r}{C_r} \right| \left[ \left| \kappa(\theta) \right| \left| \tfrac{ \Delta x_{b/p} }{ \Delta^2 + \left( y_{b/p} + g \right)^2 } -1  \right| +  \tfrac{8}{\Delta} \right] + \left| G(\cdot) \right| \\ &+ \left| \tfrac{v_r}{C_r} \right| \left| \kappa(\theta) \right| \left|u_r\right|  \left| \tfrac{ \Delta x_{b/p} }{ \Delta^2 + \left( y_{b/p} + g \right)^2 } -1 \right| + \left| \tfrac{v_r}{C_r} \right| \left| \tfrac{2}{\Delta} \right|  \left( 4 \left|u_r\right| + \left| \tilde{u} \right| \right) \\ 
&- \tfrac{1}{C_r} \left( \tfrac{ u_{rd}}{ u^2_{rd} + v^2_r } - \tfrac{2\Delta v_r}{\Delta^2 + (y_{b/p} + g)^2} \tfrac{\partial g }{\partial a}\right)Y(u_r)v^2_r
\\ \leq & \left| \tfrac{v^2_r}{C_r} \right| \left[ \left| \kappa(\theta) \right| \left| \tfrac{ \Delta x_{b/p} }{ \Delta^2 + \left( y_{b/p} + g \right)^2 } -1 \right| +  \tfrac{8}{\Delta} \right] + \Phi(\cdot) \\ 
&- \tfrac{1}{C_r} \left( \tfrac{ u_{rd}}{ u^2_{rd} + v^2_r } - \tfrac{2\Delta v_r}{\Delta^2 + (y_{b/p} + g)^2} \tfrac{\partial g }{\partial a}\right)Y(u_r)v^2_r \\ 
\leq & \left| \tfrac{v^2_r}{C_r} \right| \left[ 2\left| \kappa(\theta) \right| +  \tfrac{8}{\Delta} \right] + \Phi(\cdot) \\ 
&- \tfrac{1}{C_r} \left( \tfrac{ u_{rd}}{ u^2_{rd} + v^2_r } - \tfrac{2\Delta v_r}{\Delta^2 + (y_{b/p} + g)^2} \tfrac{\partial g }{\partial a}\right)Y(u_r)v^2_r
\end{split}
\end{align}
where,
%%%%%%%%%%%%%%%%%%%%%%%%%%%%%%%%
\begin{align}
\Phi(\cdot) \triangleq &  \left| G(\cdot) \right| +  2\left| \tfrac{v_r}{C_r} \right| \left| \kappa(\theta) \right| \left|u_r\right| + 2 \left| \tfrac{v_r}{C_r} \right| \left| \tfrac{1}{\Delta} \right|  \left( 4 \left|u_r\right| + \left| \tilde{u}_r \right| \right)
\end{align}
The function $\Phi(\cdot)$ is introduced to collect the remaining terms that have less than quadratic growth in $v_r$ and/or vanish when $\tilde{X}_2 = 0$. Note also the terms in $G(\cdot)$ with partial derivatives of $g$ that appear to have quadratic growth. Although the overall terms appear to have quadratic growth, the partial derivatives of $g$ actually decrease for increasing $v_r$ giving the entire term less than quadratic growth. From the definitions of $\Phi(\cdot)$ and $G(\cdot)$ one can easily conclude the existence of three continuous positive functions $ F_{0,2} ( \tilde{X}_1, \tilde{X}_2 , u_{rd} , \dot{u}_{rd} , V_{T} , V_{N}, \Delta ) $ which are bounded under the boundedness of the vector $ [ \tilde{X}^T_2, u_{rd}, \dot{u}_{rd}, V_{T}, V_{N}, \Delta ]^T $, with 
\[ 
F_{2} ( \tilde{X}_1, \tilde{X}_2 = 0 , u_{rd} , \dot{u}_{rd} , V_{x_e} , V_{y_e}, \Delta ) = 0,
 \] 
such that:
\begin{align}
\Phi(\cdot) \leq &~F_2(\cdot) v^2_r + F_1(\cdot) v_r + F_0(\cdot).
\end{align}

When we substitute the bound on $r_dv_r$ from \eqref{COG-eq:rdvrboundlem3} in \eqref{COG-eq:dlyap2} we obtain:
\begin{align} \label{COG-eq:dlyaplem3}
\begin{split}
\dot{V}_3 =  v_r \dot{v}_r \leq & \left| X(u_{rd}) \right| \left(\left| \tfrac{v^2_r}{C_r} \right| \left[ 2\left| \kappa(\theta) \right| +  \tfrac{8}{\Delta} \right] + \Phi(\cdot)\right) + a_x \tilde{u} r_d v_r \\ & + X(u_{rd}) v_r \tilde{r} + a_x \tilde{u} v_r \tilde{r} + a_y \tilde{u} v^2_r + Y(u_{rd}) v^2_r \\ 
&- \tfrac{1}{C_r} \left( \tfrac{ u_{rd}}{ u^2_{rd} + v^2_r } - \tfrac{2\Delta v_r}{\Delta^2 + (y_{b/p} + g)^2} \tfrac{\partial g }{\partial a}\right)Y(u_r)v^2_r \\
\leq & \left| \tfrac{1}{C_r} \right| \left[ \left| X(u_{rd}) \right| \left[ 2\left| \kappa(\theta) \right| +  \tfrac{8}{\Delta} \right] - \left|Y(u_{rd})\right| \right]v^2_r \\
&+ a_x \tilde{u} r_d v_r + X(u_{rd})(v_r\tilde{r}+\Phi(\cdot)) + a_x \tilde{u} v_r \tilde{r} + a_y \tilde{u} v^2_r
\end{split}
\end{align}
Consequently, on the manifold where $\tilde{X}_2 = 0$ we obtain
\begin{align} \label{COG-eq:dlyaplem32}
\begin{split}
\dot{V}_3 \leq& \left| \tfrac{1}{C_r} \right| \left[ X_{\max} \left[ 2 \kappa_{\max} +  \tfrac{8}{\Delta} \right] -  Y_{\min} \right]v^2_r \\ &+ X(u_{rd})(F_1(\tilde{X}_1,0,u_{rd} , \dot{u}_{rd} , V_{T} , V_{N}, \Delta )\left|v_r\right| \\ &+ F_0(\tilde{X}_1,0,u_{rd} , \dot{u}_{rd} , V_{T} , V_{N}, \Delta ))
\end{split}
\end{align}
To have boundedness of $v_r$ for small values of $\tilde{X}_2$ we have to satisfy the following inequality:
\begin{align} \label{COG-eq:condlem3}
 X_{\max} \left[ 2\kappa_{\max} +  \tfrac{8}{\Delta} \right] -  Y_{\min} < 0
\end{align}
such that the quadratic term in \eqref{COG-eq:dlyaplem32} is negative. Using \eqref{COG-eq:Delta} we need to choose $\mu$, such that:
\begin{align} \label{COG-eq:mu2}
 \mu  > & \tfrac{ 8 X_{\max} }{  Y_{\min} - 2\kappa_{\max} X_{\max} }
\end{align}
which is the condition given in Lemma \ref{COG-lem3}. Note that the denominator of $\mu$ is nonzero and positive as long of the conditions of Lemma \ref{COG-lem2} are satisfied. Consequently, near the manifold $\tilde{X}_2 = 0$ it holds that \eqref{COG-eq:dlyaplem32} is negative definite for sufficiently large $v_r$.  Consequently, near the manifold $\tilde{X}_2 = 0$ it holds that \eqref{COG-eq:dlyaplem3} is negative definite for sufficiently large $v_r$. If $\dot{V}_3$ is negative for sufficiently large $v_r$ this implies that $V_3$ decreases for sufficiently large $v_r$. Since $V_3 = 1/2v^2_r$, a decrease in $V_3$ implies a decrease in $v^2_r$ and by extension in $v_r$. Consequently, $v_r$ cannot increase above a certain value and $v_r$ is bounded near $\tilde{X}_2 = 0$ if $\mu$ is chosen such that \eqref{COG-eq:mu} holds, which completes the proof of Lemma \ref{COG-lem3}.
\hfill $\blacksquare$

\end{document}